\documentclass[11pt]{article}
\usepackage[margin=1in]{geometry}
\usepackage{amsmath,amssymb,amsthm}
\usepackage{mathtools,array}
\usepackage{hyperref}
\usepackage{xcolor}
\usepackage{microtype}

\hypersetup{
  colorlinks=true,
  linkcolor=blue!70!black,
  citecolor=green!50!black,
  urlcolor=blue!70!black
}

\newtheorem{theorem}{Theorem}[section]
\newtheorem{lemma}[theorem]{Lemma}

\newtheorem{proposition}[theorem]{Proposition}
\theoremstyle{definition}
\newtheorem{definition}[theorem]{Definition}

\newcommand{\E}{\mathbb{E}}
\newcommand{\Prb}{\mathbb{P}}
\newcommand{\R}{\mathbb{R}}
\newcommand{\I}{\mathbf{1}}
\newcommand{\cU}{\mathcal{U}}
\newcommand{\cG}{\mathcal{G}}
\newcommand{\cT}{\mathcal{T}}
\newcommand{\Unif}{\mathrm{Unif}}

\title{Envy-Free Allocations for Random Non-Additive Valuations}
\author{Pasin Manurangsi\\Google Research\\\texttt{pasin@google.com}}
\date{}

\begin{document}

\maketitle

\begin{abstract}
Benad\`e et al.~\cite{BHPV24} introduced a semi-random model for non-additive fair division in which each agent starts with an arbitrary base valuation over bundles of items and independently permutes the item names uniformly at random. While they proved that envy-free allocations exist with high probability as the number of items grows for $n = 2$ agents, they left open whether the same holds for $n > 2$ agents. We resolve this open question in the affirmative: for any fixed number of agents $n$ and any number of items $m$ divisible by $n$, an envy-free allocation exists with probability $1 - o(1)$ as $m \to \infty$.

Our proof combines the second moment method with a connection to the (small-set) expansion of generalized Johnson graphs, which may be of independent interest.
\end{abstract}

\section{Introduction}\label{sec:intro}

How to divide a collection of scarce resources fairly among multiple agents is a central problem at the intersection of economics, computer science, and multi-agent systems, with applications ranging from estate and divorce settlement to course allocation, rent division, and cloud computing~\cite{BT96,Moulin03,Budish11,BCM16}. A quintessential notion of fairness is \emph{envy-freeness (EF)}, which requires that every agent view her own share of the resources as at least as valuable as the share received by any other agent. While envy-free divisions are guaranteed to exist when resources are infinitely divisible (i.e., \emph{cake-cutting}), many practical domains involve \emph{indivisible} items that must be assigned in their entirety~\cite{Markakis17,Aziz20,Walsh20}.

When items are indivisible, envy-free allocations need not exist in the worst case: for instance, if a single desirable item is divided between two agents, the agent who does not receive the item will inevitably envy the other. However, such worst-case obstructions typically arise when there are very few items or when agents' preferences are adversarial and tightly aligned. In practice, agents' valuations often exhibit natural variability, and the number of items is frequently much larger than the number of agents. This observation motivated a rich line of research on \emph{asymptotic (or stochastic) fair division}, initiated by Dickerson et al.~\cite{DGK14}, which studies when EF allocations exist with high probability under natural probabilistic models of agent preferences as the number of items grows.

Until recently, nearly all work on asymptotic fair division focused on \emph{additive} valuations, where each agent has an independently drawn value for each item and the value of a set of items is simply the sum of the values of its elements~\cite{DGK14,KPW16,Suk16,MS17,MS20,MS21,BG22,BFGP22,MS25a,MSY25,GIMS25}. Extending asymptotic fair division to general, \emph{non-additive} valuations (which capture complementarities and substitutabilities among items) requires care even to formulate a meaningful stochastic model.
Despite this challenge, Benad\`e et al.~\cite{BHPV24} successfully formulated the \emph{random renaming} model, which is very general yet offers a rich collection of interesting (non-)existence results.

\paragraph{Random Renaming Model.} To discuss further, we first introduce some notation, starting with the basics of fair division and EF.
We consider the problem of dividing a set $[m] := \{1, \dots, m\}$ of $m$ indivisible items among a set $[n] := \{1, \dots, n\}$ of $n$ agents. A \emph{bundle} is any subset of items $S \subseteq [m]$. Each agent $i \in [n]$ has a \emph{valuation function} $u_i : 2^{[m]} \to \R$ that assigns a real value $u_i(S)$ to every bundle $S \subseteq [m]$. An \emph{allocation} is an ordered partition $A = (A_1, \dots, A_n)$ of $[m]$ into $n$ pairwise disjoint bundles, where $A_i \subseteq [m]$ is the bundle assigned to agent $i$. An allocation $A = (A_1, \dots, A_n)$ is \emph{envy-free (EF)} with respect to valuations $(u_1, \dots, u_n)$ if every agent $i \in [n]$ weakly prefers her own bundle $A_i$ to any other bundle $A_j$. That is, $u_i(A_i) \ge u_i(A_j)$ for all $i, j \in [n].$

To study non-additive valuations in a stochastic setting without imposing structural restrictions, Benad\`e et al.~\cite{BHPV24} introduced the \emph{random renaming} model. In this model, each agent $i \in [n]$ starts with an arbitrary\footnote{The valuation $v_i$ is not assumed to be monotone or subadditive. Thus, this model can capture goods, chores, and mixed manna with arbitrary complementarities or substitutabilities.} base valuation function $v_i : 2^{[m]} \to \R$ and draws an independent permutation $\pi_i \sim \Unif(S_m)$ of the items, where $S_m$ denotes the symmetric group of all permutations of $[m]$ and $\Unif(\mathcal{X})$ denotes the uniform distribution over a finite set $\mathcal{X}$. Agent $i$'s randomized valuation $v_i^{\pi_i} : 2^{[m]} \to \R$ is defined on each bundle $S \subseteq [m]$ by
\[
  v_i^{\pi_i}(S) := v_i\!\left(\pi_i^{-1}(S)\right), \qquad \text{where} \qquad \pi_i^{-1}(S) := \{\pi_i^{-1}(g) : g \in S\}.
\]
In other words, the items are ``renamed'' for agent $i$ based on the random permutation $\pi_i$. Note that any high-probability guarantee holding for all base valuations in this semi-random model automatically carries over to the classical i.i.d.\ additive model, since an i.i.d.\ additive valuation is invariant under item renaming.

For an allocation $A = (A_1, \dots, A_n)$, we similarly write $\pi_i^{-1}(A) := (\pi_i^{-1}(A_1), \dots, \pi_i^{-1}(A_n))$. Under $\boldsymbol{\pi} = (\pi_1, \dots, \pi_n) \in S_m^n$, an allocation $A = (A_1, \dots, A_n)$ is envy-free if and only if
\[
  v_i\!\left(\pi_i^{-1}(A_i)\right) \ge v_i\!\left(\pi_i^{-1}(A_j)\right) \qquad \text{for all } i, j \in [n].
\]

Notice that divisibility ($n \mid m$) is necessary for envy-freeness in the random renaming model: if every agent only cares about the number of items she receives, i.e., $v_i(S) = |S|$ for all $i \in [n]$, then random renaming leaves $v_i$ unchanged ($v_i^{\pi_i}(S) = |S|$), and in any allocation when $n \nmid m$, some agent receives strictly fewer items than another and envies her with probability $1$. Throughout this paper, we therefore assume that $n$ divides $m$, write $q := m/n \in \mathbb{Z}_{\ge 1}$ for the equal bundle size, and focus on \emph{balanced allocations} in which every agent receives a bundle of size $q$. We write $\binom{[m]}{q}$ for the collection of all $q$-element subsets of $[m]$, and denote the set of all balanced allocations by
\[
  \cU := \left\{ A = (A_1, \dots, A_n) \in \binom{[m]}{q}^n : A_1 \sqcup \dots \sqcup A_n = [m] \right\}.
\]
We also assume without loss of generality that each base valuation $v_i$ assigns distinct values to all bundles in $\binom{[m]}{q}$.\footnote{If $v_i$ has ties on $\binom{[m]}{q}$, we may perturb $v_i$ infinitesimally to obtain a valuation $\tilde{v}_i$ taking distinct values on $\binom{[m]}{q}$ such that $\tilde{v}_i(S) > \tilde{v}_i(S') \implies v_i(S) \ge v_i(S')$ for all $S, S' \in \binom{[m]}{q}$. Any balanced allocation that is envy-free with respect to $\tilde{v}_1, \dots, \tilde{v}_n$ is then also envy-free with respect to $v_1, \dots, v_n$.}

\paragraph{Our Main Result.}
In~\cite{BHPV24}, Benad\`e et al.\ proved that envy-free allocations exist with high probability in the random renaming model in two special cases: (i) for any number of agents $n$ when valuations are restricted to be \emph{order-consistent} (a class where bundle preferences respect an underlying item ranking) and $m = \omega(n^2)$ with $n \mid m$; and (ii) for arbitrary valuation functions in the special case of $n = 2$ agents and even $m$. However, their $n = 2$ proof relies crucially on the fact that a balanced two-agent allocation $(S, [m] \setminus S)$ is completely determined by a single subset $S \in \binom{[m]}{m/2}$, which fails for $n \ge 3$. They highlighted as their main open question whether EF allocations exist with high probability for general valuation functions when $n > 2$ (noting that the question remained open even for monotone submodular valuations).

We resolve this open question in the affirmative for all $n \ge 3$ and all valuation functions.

\begin{theorem}\label{thm:main}
For any $n \geq 3$, any $m$ divisible by $n$, and any base valuation functions $v_1, \dots, v_n : 2^{[m]} \to \R$, the probability that no envy-free allocation exists in the random renaming model is $O\!\left(\frac{n^2}{m^{1/180}}\right)$.
\end{theorem}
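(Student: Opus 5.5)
Our plan is a second moment argument over a carefully restricted random-like family of allocations. For each agent $i$, let $r_i : \binom{[m]}{q} \to [0,1]$ be the normalized rank of a bundle under $v_i$, so $r_i(S) = \Prb_{T\sim\Unif(\binom{[m]}{q})}[v_i(T) < v_i(S)]$; this is well defined because values are distinct. Fix a threshold $\tau$ slightly below $1$, say $\tau = 1 - \epsilon$ with $\epsilon = m^{-c}$ for a small constant $c$. Call agent $i$ \emph{happy} in allocation $A$ if $r_i(\pi_i^{-1}(A_i)) \ge \tau'$ for some larger $\tau'$ and $r_i(\pi_i^{-1}(A_j)) \le \tau$ for all $j \neq i$.

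We would then simplify to a cleaner sufficient event. Say that $A$ is \emph{good} if for every $i$ we have $r_i(\pi_i^{-1}(A_i)) \ge \max_{j\ne i} r_i(\pi_i^{-1}(A_j))$. Let $X = \sum_{A \in \cU} \I[A \text{ good}]$. We show $\Prb[X=0] \le \mathrm{Var}(X)/\E[X]^2 = o(1)$. For a single $A$, the sets $\pi_i^{-1}(A_1),\dots,\pi_i^{-1}(A_n)$ form a uniformly random balanced partition, so $\Prb[A\text{ good}] = \prod_i p_i$, where $p_i$ is the probability that block $i$ of a uniformly random partition is $v_i$-best among its $n$ blocks. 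By symmetry of block labels, $p_i = 1/n$. Hence $\E[X] = |\cU| n^{-n}$, independent of the valuations.

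The core of the proof is the second moment, $\E[X^2]=\sum_{A,B}\Prb[A,B\text{ good}]$. Since the agents' permutations are independent, this probability factors as $\prod_i P_i(A,B)$. Here $P_i(A,B)$ depends only on the \emph{overlap matrix} $(|A_j\cap B_k|)_{j,k}$, because $\pi_i$ is uniform. For "typical" pairs, the overlap matrix has all entries near $q/n$, and we want $P_i(A,B) \le (1+o(1))/n^2$. For this we would write $P_i(A,B)$ as $\langle f, M f\rangle$, where $f$ is the indicator over partitions of "block $i$ is best" and $M$ is the operator that resamples a partition conditioned on a given overlap pattern. This operator is a (generalized) Johnson-type Markov operator, and near-independence for typical overlaps is exactly a spectral or mixing statement about that graph. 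For atypical overlaps, which have exponentially small weight in the count of pairs $(A,B)$, we need only a crude bound. We plan to use small-set expansion of generalized Johnson graphs: the set $\{$block $i$ best$\}$ has density $1/n$, and expansion bounds $P_i(A,B)$ by roughly $(1/n)^{1+\delta}$ whenever the overlap is far from complete. This has to beat the entropy loss of near-identical pairs. Summing the product over all overlap matrices, weighted by their multinomial counts, should give $\E[X^2] \le (1+O(n^2 m^{-1/180}))\E[X]^2$.

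We expect the main obstacle to be the intermediate regime of overlaps, where $B$ is neither close to $A$ nor near-independent of it. There, the per-agent correlation bound from expansion must be quantitatively strong enough, with the product of $n$ factors each slightly above $1/n^2$, to be dominated by the decay of the number of such pairs. We would first try hypercontractivity, or level-$d$ inequalities for the Johnson scheme, to get $P_i \le n^{-2}(1+ O(\text{overlap deviation}^2))$. We would then balance this against the Gaussian-like concentration of overlap matrices, and the polynomial losses would produce the weak exponent $1/180$.
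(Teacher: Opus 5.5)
Your first moment is correct, and cleaner than the paper's. Exactly one block of a uniform balanced partition is $v_i$-best and the blocks are exchangeable, so $p_i = 1/n$ and $\E[X] = |\cU|\, n^{-n}$ exactly. The gap is the second moment, which is the entire difficulty and which the proposal only describes as a hope.

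The event ``block $i$ is $v_i$-best'' depends on all $n$ blocks. So your operator $M_N$ acts on the set $\cU$ of balanced partitions, not on $\binom{[m]}{q}$, and it is not a generalized Johnson graph. For $n \ge 3$ the permutation action of $S_m$ on $\cU$ is not multiplicity-free: already the $(m-1,1)$ Specht module occurs $n-1$ times. Hence the adjacency algebra spanned by the overlap-matrix relations is non-commutative, and there is no Eberlein-type spectrum to feed into the Expander Mixing Lemma or a level-$d$ inequality.

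The specific bounds you hope for are also false as stated. Take the unit-demand valuation $v_i(S) = \I[1 \in S]$, with ties broken by perturbation. Block $i$ is best iff $\pi_i(1) \in A_i$, so $P_i(A,B) = |A_i \cap B_i|/m$. Two consequences follow.
\begin{itemize}
\item $n^2 P_i(A,B) = |A_i \cap B_i|/(q/n)$ is linear, not quadratic, in the overlap deviation.
\item When $|A_i \cap B_i| = q/2$, an overlap far from complete, $P_i = 1/(2n)$, which exceeds $n^{-1-\delta}$ for large $n$. So your small-set-expansion bound fails there.
\end{itemize}
Whatever correct bound holds would have to be proved for arbitrary $v_i$ on the partition space. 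It would then have to be summed against the full distribution of overlap matrices, including the intermediate regime you flag. Neither step is carried out, so the exponent $1/180$ has no source in your argument.

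The paper avoids all of this with two devices.

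\textbf{Top-tier event.} It replaces ``block $i$ is best'' by the stronger event $\cG_i$: $B_i$ lies in the top-$\alpha$ tier $\cT_i$ of $q$-subsets and no other block does, with $\alpha \le m^{-1/10} = (\varepsilon/(30n^2))^{18}$. In the second moment it drops the clause about the other blocks, using $\cG_i(B) \le \I[B_i \in \cT_i]$. Each agent's factor then becomes a collision probability on a single random edge of $J(m,q,t_i)$. There the explicit spectrum and the Expander Mixing Lemma give the bound $\alpha^2 + \alpha(\delta + 50n/m)$.

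The price is an inexact first moment. It is recovered by a union bound over the other $n-1$ blocks, using the Kneser bound $5\alpha^{19/18} \ll \alpha/n^2$ obtained from hypercontractivity. This is why the tier must have density far below $1/n$, and why your density-$1/n$ event cannot be relaxed the same way.

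\textbf{$\delta$-intersection-balanced family.} Instead of summing over all of $\cU$, the paper sums over a $\delta$-intersection-balanced family of $D = \lceil 12/(\varepsilon\mu)\rceil$ allocations. Every off-diagonal pair then has all overlaps within $(1\pm\delta)q/n$. So there is no atypical or intermediate regime to balance against entropy, and the diagonal terms contribute only $1/(D\mu) \le \varepsilon/12$.
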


In particular, Theorem~\ref{thm:main} implies that an envy-free allocation exists with high probability whenever $m = \omega(n^{360})$. We have not attempted to optimize the exponent $1/180$ in Theorem~\ref{thm:main}, prioritizing a simpler and self-contained proof instead. Note that a polynomial dependence on $n$ is necessary: as shown by Benad\`e et al.~\cite{BHPV24} via a birthday-paradox construction (discussed in Section~\ref{subsec:related} below), an envy-free allocation can fail to exist with constant probability when $m = O(n^2)$. Closing the polynomial gap between our upper bound on $m$ and this $m = \Omega(n^2)$ lower bound is an interesting open question; see Section~\ref{sec:conclusion} for further discussion.

\subsection{Related Work}\label{subsec:related}

\paragraph{The Random Renaming Model.}
In the random renaming model introduced by Benad\`e et al.~\cite{BHPV24}, their two positive results discussed above rely on techniques specialized to their respective settings that do not extend to $n \ge 3$ arbitrary valuations: for order-consistent valuations, they establish high-probability EF via variants of the Round-Robin algorithm, which requires an underlying single-item ranking that does not exist for general set functions; for $n = 2$ arbitrary valuations, they use a tournament hypergraph symmetry argument over complementary half-bundles $(S, [m] \setminus S)$, which relies on a two-agent allocation being determined by a single subset $S \in \binom{[m]}{m/2}$ and breaks down for $n \ge 3$. In the opposite direction to Theorem~\ref{thm:main}, they also proved a lower bound of $m = \Omega(n^2)$. Specifically, suppose each agent has the unit-demand valuation $v_i(S) = \I[1 \in S]$ (where $\I[\mathcal{E}] \in \{0, 1\}$ denotes the indicator of an event $\mathcal{E}$), deriving value $1$ if her bundle contains her single favorite item $\pi_i(1)$ and $0$ otherwise. Because $\pi_1(1), \dots, \pi_n(1)$ are independent and uniformly distributed in $[m]$, the birthday paradox implies that with probability $\Omega(\min(n^2/m, 1))$, at least two agents share the same favorite item, in which case one of them must receive value $0$. Thus, an EF allocation fails to exist with constant probability whenever $m = O(n^2)$.

\paragraph{Asymptotic Envy-Freeness for Additive Valuations.}
In the classical i.i.d.\ additive model where each agent $i$'s value $v_{i, g}$ for item $g$ is drawn independently from a distribution $\mathcal{D}$, Dickerson et al.~\cite{DGK14} showed that giving each item to the agent who values it most yields an envy-free and Pareto-optimal allocation with high probability when $m = \Omega(n\log n)$, whereas no envy-free allocation exists with high probability when $m = n + o(n)$. Manurangsi and Suksompong~\cite{MS20,MS21} subsequently determined the sharp threshold for envy-freeness: under mild density assumptions on $\mathcal{D}$ (such as bounded density on $[0, 1]$), as $n \to \infty$, when $n \mid m$, EF allocations exist with high probability already for $m \ge 2n$; for general $m$ not ``nearly'' divisible by $n$, EF allocations fail to exist with high probability for $m = O(n\log n / \log\log n)$, and exist with high probability via Round-Robin for $m = \Omega(n\log n / \log\log n)$. Bai and G\"olz~\cite{BG22} extended the $m = \Omega(n\log n / \log\log n)$ envy-freeness and $m = \Omega(n\log n)$ EF + Pareto-optimality upper bounds to agents with independent, non-identical value distributions. Further works on stochastic additive valuations have studied smoothed additive utilities~\cite{BFGP22}, indivisible chores~\cite{MS25a}, unequal agent entitlements (weights)~\cite{MSY25}, online/dynamic arrivals~\cite{ZP20,BHP22,BKP22}, and house allocation or matching settings~\cite{GSV19,YI25,KSSY26}.

\paragraph{Other Fairness Notions and Group Fair Division.}
Beyond envy-freeness among individual agents, asymptotic fair division has also been studied for alternative fairness criteria such as proportionality~\cite{Suk16,MS21,BFGP22,MS25a,MSY25}, maximin share~\cite{KPW16,AMNS17,FGH19,BFGP22}, and EF up to any item (EFX)~\cite{MS21,BHPV24}. Another related line of research investigates asymptotic fair division when items are allocated to predetermined or variable groups of agents~\cite{MS17,GIMS25}; in particular, our proof uses a combinatorial notion of $\delta$-intersection balanced allocations introduced by G\"olz et al.~\cite{GIMS25} in the latter setting.

\subsection{Proof Overview}\label{subsec:overview}

Before turning to the formal proof, we give a high-level overview of the argument. Set the target failure probability $\varepsilon := \frac{30 n^2}{m^{1/180}}$ (assuming $\varepsilon \le 1/2$, as otherwise Theorem~\ref{thm:main} holds trivially); we will show that an envy-free allocation exists with probability at least $1 - \varepsilon$.

\paragraph{The Second Moment Method.}
As noted above, for $n \ge 3$ agents with arbitrary valuation functions $v_i$, neither item-by-item procedures nor two-agent complement symmetry applies, as agent $i$'s envy-freeness condition $v_i(\pi_i^{-1}(A_i)) \ge \max_{j \ne i} v_i(\pi_i^{-1}(A_j))$ depends on all $n$ bundles of a balanced allocation $A = (A_1, \dots, A_n) \in \cU$ simultaneously. Instead, we prove existence via the \emph{second moment method}: we define a non-negative random variable $Z$ counting ``good'' allocations (such that $Z > 0$ implies the existence of an envy-free allocation) and apply the Paley--Zygmund inequality,
\[
  \Prb_{\boldsymbol{\pi}}[Z > 0] \ge \frac{(\E[Z])^2}{\E[Z^2]}.
\]
To make this ratio at least $1 - \varepsilon$, we need $\E[Z^2] \le (1 + \varepsilon)(\E[Z])^2$. It is instructive to first understand how our proof works when $Z$ sums over \emph{all} balanced allocations $A \in \cU$, and then see at the end why we restrict the sum to a carefully selected subcollection of $\cU$.

\paragraph{Top-$\alpha$ Subset Ranking and Good Allocations.}
At first glance, it seems most intuitive to let $Z$ count the number of EF allocations $A \in \cU$. However, the difficulty is that for an arbitrary valuation $v_i$, the probability that a random allocation $B \sim \Unif(\cU)$ is EF for agent $i$ depends on the full structure of $v_i$ and is hard to control.

To overcome this, we introduce our main insight: replacing EF with a \emph{top-$\alpha$ subset-rank condition}. For each agent $i \in [n]$, let
\[
  \cT_i \subseteq \binom{[m]}{q}, \qquad \text{with} \qquad \frac{|\cT_i|}{\binom{m}{q}} = \alpha,
\]
denote the set of the top $\alpha$-fraction of $q$-subsets with the highest values under $v_i$, where $\alpha \approx \left(\frac{\varepsilon}{30n^2}\right)^{18} \in (0, 1)$ is a small target density. We call a balanced allocation $B = (B_1, \dots, B_n) \in \cU$ \emph{good for agent $i$} (i.e., $\cG_i(B) = 1$) if agent $i$'s own bundle $B_i$ falls in her top tier $\cT_i$, while all other $n - 1$ bundles $B_j$ ($j \ne i$) fall outside $\cT_i$:
\[
  \cG_i(B) := \I\!\left[B_i \in \cT_i \;\land\; \forall j \in [n] \setminus \{i\}, \; B_j \notin \cT_i\right].
\]
If $\cG_i(\pi_i^{-1}(A)) = 1$ for every agent $i \in [n]$, then every agent $i$ receives a bundle $\pi_i^{-1}(A_i) \in \cT_i$ that she strictly prefers to every other bundle $\pi_i^{-1}(A_j) \notin \cT_i$ ($j \ne i$), so $A$ is envy-free under $\boldsymbol{\pi}$.

For any fixed allocation $A \in \cU$, because $\pi_1, \dots, \pi_n \sim \Unif(S_m)$ are independent and each $\pi_i^{-1}(A)$ is distributed uniformly over $\cU$, the probability that $A$ is good for all $n$ agents simultaneously is
\[
  \mu := \Prb_{\boldsymbol{\pi}}\!\left[\prod_{i=1}^n \cG_i\!\left(\pi_i^{-1}(A)\right) = 1\right] = \prod_{i=1}^n \E_{B \sim \Unif(\cU)}[\cG_i(B)].
\]
If $\E_{B \sim \Unif(\cU)}[\cG_i(B)] \approx \Prb[B_i \in \cT_i] = \alpha$ for each $i$, then $\mu \approx \alpha^n$.

\paragraph{Reducing Moments to Edge Collisions in Generalized Johnson Graphs.}
To compare the first and second moments, we need to answer two questions: (1) for a single allocation $A \in \cU$, how close is $\E_{B \sim \Unif(\cU)}[\cG_i(B)]$ to $\Prb[B_i \in \cT_i] = \alpha$? and (2) for a pair of allocations $A, A' \in \cU$, how close is the joint success probability $\Prb_{\boldsymbol{\pi}}[\forall i, \, \cG_i(\pi_i^{-1}(A)) = \cG_i(\pi_i^{-1}(A')) = 1]$ to $\mu^2 \approx \alpha^{2n}$?

Both questions reduce to a single graph-theoretic problem on the vertex set $\binom{[m]}{q}$. For each integer $0 \le t < q$ (with $q \le m/2$), the \emph{generalized Johnson graph} $J(m, q, t)$ (Definition~\ref{def:johnson-graph}) is the undirected graph on $\binom{[m]}{q}$ in which two $q$-subsets $S, S'$ are adjacent if and only if $|S \cap S'| = t$; in the special case $t = 0$ (disjoint subsets), $J(m, q, 0)$ is the \emph{Kneser graph}. Because a uniformly random permutation $\pi_i \sim \Unif(S_m)$ preserves intersection sizes and acts transitively on pairs of $q$-subsets of any given overlap $t$, for any fixed $S_0, S'_0 \in \binom{[m]}{q}$ with $|S_0 \cap S'_0| = t$, the randomly renamed pair $(\pi_i^{-1}(S_0), \pi_i^{-1}(S'_0))$ is distributed as a uniformly random oriented edge of the graph $J(m, q, t)$, which is denoted by $(S, S') \sim E(J(m, q, t))$. This translates both moments into \emph{edge collision probabilities} $\Prb_{(S, S') \sim E(J(m, q, t))}[S \in \cT_i \land S' \in \cT_i]$:
\begin{itemize}
  \item \textbf{First moment via the Kneser graph ($t = 0$):} In a single random allocation $B = \pi_i^{-1}(A) \sim \Unif(\cU)$, agent $i$'s bundle $B_i$ lands in $\cT_i$ with probability $\alpha$, while for each $j \ne i$, the disjoint pair $(B_i, B_j)$ is a uniformly random edge of the Kneser graph $J(m, q, 0)$. By a union bound over the $n - 1$ other agents $j \ne i$,
  \[
    \E_{B \sim \Unif(\cU)}[\cG_i(B)] \ge \alpha - (n - 1) \Prb_{(S, S') \sim E(J(m, q, 0))}\!\left[S \in \cT_i \land S' \in \cT_i\right].
  \]
  Thus, to ensure $\E[\cG_i(B)] = (1 - o(1/n))\alpha$ (and hence $\mu = (1 - o(1))\alpha^n$), it suffices to show that the edge collision probability in $J(m, q, 0)$ is $o(\alpha / n^2)$.

  \item \textbf{Second moment via the generalized Johnson graph ($t \approx q/n$):} For two allocations $A, A' \in \cU$ with $t_i := |A_i \cap A'_i| < q$ for each $i \in [n]$, we use the pointwise upper bound $\cG_i(B) \le \I[B_i \in \cT_i]$ (dropping the conditions $B_j \notin \cT_i$ for $j \ne i$). Under this relaxation, if both $A$ and $A'$ are good for all agents, then both $\pi_i^{-1}(A_i) \in \cT_i$ and $\pi_i^{-1}(A'_i) \in \cT_i$ must hold for each $i \in [n]$. Since $(\pi_i^{-1}(A_i), \pi_i^{-1}(A'_i))$ is a uniformly random edge of $J(m, q, t_i)$, by independence of $\pi_1, \dots, \pi_n$, the joint success probability is at most
  \[
    \prod_{i=1}^n \Prb_{(S, S') \sim E(J(m, q, t_i))}\!\left[S \in \cT_i \land S' \in \cT_i\right].
  \]
  Because we relaxed $\cG_i(B)$ to $\I[B_i \in \cT_i]$ (which has expectation $\alpha$), this product should be compared against $\mu^2 \approx \alpha^{2n}$, which is why showing $\E[\cG_i(B)] \approx \alpha$ in the first moment above is crucial. Moreover, for two typical allocations $A, A' \in \cU$, two bundles $A_i, A'_i$ of size $q = m/n$ have expected overlap $q^2/m = q/n$. Thus, to bound the product above by $\alpha^{2n}(1 + O(\varepsilon))$, it suffices to show that whenever $t \approx q/n$, the edge collision probability in $J(m, q, t)$ is at most $\alpha^2(1 + O(\varepsilon / n))$.
\end{itemize}

\paragraph{Bounding Edge Collisions in Kneser and Generalized Johnson Graphs.}
With the reductions above in place, the proof of Theorem~\ref{thm:main} boils down to establishing two purely graph-theoretic edge collision (or \emph{small-set expansion}) bounds for arbitrary vertex subsets $\cT \subseteq \binom{[m]}{q}$ of density $\alpha$:
\begin{enumerate}
  \item In the Kneser graph $J(m, q, 0)$, the edge collision probability is $O(\alpha^{19/18}) \ll \frac{\alpha}{n^2}$ (Lemma~\ref{lem:kneser-collision}).
  \item In the generalized Johnson graph $J(m, q, t)$ with $t \approx q/n$, the edge collision probability is $(1 + o(1/n))\alpha^2$ (Lemma~\ref{lem:johnson-ib-collision}).
\end{enumerate}
We prove both bounds in Section~\ref{sec:fourier-proofs}: for the Kneser graph, we couple a random disjoint edge to a product channel on $\{0, 1\}^m$ and apply standard hypercontractivity on product spaces~\cite{Bonami70,Beckner75,ODonnell14}; for $J(m, q, t)$ with $t \approx q/n$, we combine the Expander Mixing Lemma~\cite{AC88,HLW06} with an elementary bound on the eigenvalues of $J(m, q, t)$~\cite{BCIM18,KR25}. We defer the full proofs of both collision bounds to Section~\ref{sec:fourier-proofs}.

It is worth noting that while Kneser and generalized Kneser graphs have appeared in the fair division literature before, prior works relied on their \emph{topological coloring}~\cite{Suk16agreeable,MS19agreeable,GHIMS22,KSV20,SHS21,FHSZ21} or \emph{local-search}~\cite{PR20} properties rather than their expansion. To our knowledge, our work is the first to leverage the \emph{spectral and (small-set) expansion} properties of Kneser and generalized Johnson graphs in fair division. We hope that this connection can lead to further insight in future research.

\paragraph{Technical Detail: Pruning Atypical Overlaps via $\delta$-IB Allocations.}
There is one remaining technical detail we have glossed over above. If we defined $Z$ as the sum of $\prod_{i=1}^n \cG_i(\pi_i^{-1}(A))$ over \emph{all} balanced allocations $A \in \cU$, then while almost all pairs $(A, A') \in \cU^2$ have bundle overlaps $t_i = |A_i \cap A'_i|$ very close to $q/n = m/n^2$, a tiny fraction of pairs $(A, A')$ have atypical overlaps far from $q/n$ (for example, $A' = A$ where $A_i = A'_i$, or $A'$ differing from $A$ on only a few items where $t_i \approx q$). When $n$ is large and $\mu \approx \alpha^n$ is small, even a small fraction of such highly correlated pairs could inflate $\E[Z^2]$.

To avoid dealing with these atypical pairs, we borrow a technique of G\"olz et al.~\cite{GIMS25}: rather than summing over all of $\cU$, we restrict the count
\[
  Z := \sum_{k=1}^D \prod_{i=1}^n \cG_i\!\left(\pi_i^{-1}(A^{(k)})\right)
\]
to a subcollection $A^{(1)}, \dots,\allowbreak A^{(D)} \in \cU$ of balanced allocations that are \emph{$\delta$-intersection balanced ($\delta$-IB)}, meaning that every pair $A^{(k)}, A^{(\ell)}$ ($k \ne \ell$) satisfies $\big| |A_i^{(k)} \cap A_j^{(\ell)}| - \frac{q}{n} \big| \le \delta \frac{q}{n}$ for all $i, j \in [n]$ (Definition~\ref{def:ib}). Because two independent random allocations satisfy this overlap condition with high probability by a standard concentration inequality, a union bound shows that a $\delta$-IB family of large size exists (Lemma~\ref{lem:ib-existence}). This concludes our proof overview.

\section{Preliminaries}\label{sec:prelim}

Throughout the remainder of the paper, $n \ge 3$ denotes the number of agents, $m$ is a positive integer divisible by $n$, and $q := m/n$ is the bundle size in a balanced allocation.

\subsection{Generalized Johnson and Kneser Graphs}\label{subsec:prelim-graphs}

\begin{definition}[Generalized Johnson Graph and Kneser Graph~(e.g.,~\cite{KMMS25})]\label{def:johnson-graph}
Let $t, q, m$ be integers such that $0 \le t < q \le m/2$. The \emph{generalized Johnson graph} $J(m, q, t)$ is the undirected graph with vertex set $\binom{[m]}{q}$ in which two vertices $S, S' \in \binom{[m]}{q}$ are connected by an edge if and only if $|S \cap S'| = t$. When $t = 0$, the graph $J(m, q, 0)$ is the \emph{Kneser graph}.
\end{definition}

Let $(S, S') \sim E(J(m, q, t))$ denote a uniformly random oriented edge of $J(m, q, t)$, i.e., a pair of subsets $(S, S') \in \binom{[m]}{q} \times \binom{[m]}{q}$ drawn uniformly at random from all pairs satisfying $|S \cap S'| = t$.

\subsection{\texorpdfstring{$\delta$}{delta}-Intersection Balanced Allocations}\label{subsec:prelim-ib}

If two balanced allocations $A, A' \in \cU$ are chosen independently and uniformly at random, then for any agents $i, j \in [n]$, the expected size of the overlap $|A_i \cap A'_j|$ is $q \cdot \frac{q}{m} = \frac{q}{n} = \frac{m}{n^2}$. Following G\"olz et al.~\cite{GIMS25}, we call a collection of balanced allocations \emph{$\delta$-intersection balanced ($\delta$-IB)} if all pairwise bundle intersections lie within a $(1 \pm \delta)$ factor of this expectation:

\begin{definition}[$\delta$-Intersection Balanced Allocations~\cite{GIMS25}]\label{def:ib}
For $\delta \in [0, 1]$, two balanced allocations $A = (A_1, \dots, A_n) \in \cU$ and $A' = (A'_1, \dots, A'_n) \in \cU$ are said to be \emph{$\delta$-intersection balanced ($\delta$-IB)} if
\[
  \frac{m(1 - \delta)}{n^2} \le |A_i \cap A'_j| \le \frac{m(1 + \delta)}{n^2} \qquad \text{for all } i, j \in [n].
\]
We say that balanced allocations $A^{(1)}, \dots, A^{(D)} \in \cU$ are $\delta$-IB if every pair $A^{(k)}, A^{(\ell)}$ ($1 \le k < \ell \le D$) among them is $\delta$-IB.
\end{definition}

The following lemma guarantees the existence of large $\delta$-IB families of balanced allocations whenever $m$ is logarithmic in the family size $D$. Its proof follows that of G\"olz et al.~\cite[Lemma~5]{GIMS25}; we repeat it here because the parameters are slightly different.

\begin{lemma}[cf.~{\cite[Lemma~5]{GIMS25}}]\label{lem:ib-existence}
For any integers $n, D \ge 2$, and any $\delta \in (0, 1)$, if $m$ is divisible by $n$ and $m > \frac{n^3}{\delta^2} \ln(n D)$,
then there exist $D$ balanced allocations $A^{(1)}, \dots, A^{(D)} \in \cU$ that are $\delta$-IB.
\end{lemma}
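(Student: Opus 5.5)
The plan is to use the probabilistic method. Draw $A^{(1)}, \dots, A^{(D)}$ independently and uniformly from $\cU$, and show that with positive probability every pair is $\delta$-IB. Fix a pair $k < \ell$ and indices $i, j \in [n]$. Conditioned on $A^{(k)}$, the bundle $A^{(\ell)}_j$ is a uniformly random $q$-subset of $[m]$, because $A^{(\ell)}$ is uniform over $\cU$ and the symmetric group acts transitively. Hence $X := |A^{(k)}_i \cap A^{(\ell)}_j|$ is hypergeometric: we draw $q$ items without replacement from $m$ items, of which $q$ are marked. Its mean is $q^2/m = m/n^2$.

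The main step is a concentration bound for $X$. I would invoke Hoeffding's observation that sampling without replacement is at least as concentrated as sampling with replacement, which lets the multiplicative Chernoff bound for $\mathrm{Bin}(q, q/m)$ carry over to $X$. For $\delta \in (0,1)$ and $\mu_X = m/n^2$, this gives
\[
\Prb\bigl[|X - \mu_X| > \delta \mu_X\bigr] \le 2\exp\!\left(-\frac{\delta^2 \mu_X}{3}\right) = 2\exp\!\left(-\frac{\delta^2 m}{3n^2}\right).
\]

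Next comes a union bound. There are fewer than $D^2/2$ pairs $(k,\ell)$ and $n^2$ index pairs $(i,j)$, so the total failure probability is at most $D^2 n^2 \exp(-\delta^2 m/(3n^2))$. The hypothesis $m > \frac{n^3}{\delta^2}\ln(nD)$ gives $\frac{\delta^2 m}{3n^2} > \frac{n}{3}\ln(nD)$. To make the failure probability less than $1$, we need $\frac{n}{3}\ln(nD) > 2\ln(nD)$, i.e.\ $n > 6$.

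The only real obstacle is bookkeeping for small $n$ ($2 \le n \le 6$), where the factor $n/3$ is not enough. There I would tighten the union bound. For fixed $(k,\ell)$ and $i$, the intersections over $j$ sum to $q$, so checking $n-1$ of them with a slightly smaller $\delta$ would help. A simpler option is to use the sharper Chernoff exponent $\delta^2\mu/(2+\delta)$ for the upper tail and $\delta^2\mu/2$ for the lower tail, and bound each tail separately. This gives failure probability at most $D^2 n^2 \exp(-\delta^2 m/(3n^2))$ overall, rather than twice that. We then need $\frac{n}{3}\ln(nD) \ge \ln(D^2 n^2)$, which again requires $n \ge 6$.

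If that still falls short, I would fall back on Hoeffding's additive bound, $\Prb[|X - \mu_X| > \delta\mu_X] \le 2\exp(-2\delta^2\mu_X^2/q) = 2\exp(-2\delta^2 m/n^3)$. With the hypothesis this is at most $2(nD)^{-2}$. Since $m/n^3 > \ln(nD)/\delta^2$, the union bound yields total failure at most $D^2 n^2 (nD)^{-2} = 1$, with strict inequality coming from the strict hypothesis and from the count of pairs being $\binom{D}{2} < D^2/2$. This fallback works uniformly in $n \ge 2$ and matches the $n^3$ in the hypothesis exactly, so I expect it is the intended route. The only thing to verify carefully is that Hoeffding's inequality applies to the hypergeometric distribution, via Hoeffding's 1963 reduction from sampling without replacement. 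A positive success probability then implies that the desired family $A^{(1)}, \dots, A^{(D)}$ exists.
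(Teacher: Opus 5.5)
Your final (fallback) route is the paper's proof: independent uniform allocations, Hoeffding's bound for sampling without replacement giving $2\exp(-2\delta^2 m/n^3)$ per event, and a union bound over the $\binom{D}{2} n^2$ events. Your bookkeeping showing the total failure probability is strictly below $1$ is correct. The multiplicative-Chernoff detours can simply be dropped, since the additive Hoeffding bound already works uniformly for every $n \ge 2$.
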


\begin{proof}
Draw $A^{(1)}, \dots, A^{(D)} \sim \Unif(\cU)$ independently. Fix any $1 \le k < \ell \le D$ and $i, j \in [n]$. Conditioned on $A_i^{(k)}$, the bundle $A_j^{(\ell)} \sim \Unif\binom{[m]}{q}$ ($q = m/n$) is a uniform sample of size $q$ drawn without replacement from $[m]$, so $|A_i^{(k)} \cap A_j^{(\ell)}|$ is the sum of $q$ draws without replacement from the $\{0, 1\}$-population $(\I[g \in A_i^{(k)}])_{g \in [m]}$ of mean $\frac{q}{m} = \frac{1}{n}$. By Hoeffding's inequality for sampling without replacement~\cite[Theorems~2 and~4]{Hoeffding63},
\[
  \Prb\!\left[\left| |A_i^{(k)} \cap A_j^{(\ell)}| - \frac{m}{n^2} \right| > \frac{\delta m}{n^2}\right] \le 2 \exp\!\left(-\frac{2(\delta m / n^2)^2}{m / n}\right) = 2 \exp\!\left(-\frac{2\delta^2 m}{n^3}\right).
\]
By a union bound over all $\binom{D}{2}$ pairs $\{k, \ell\}$ and all $n^2$ pairs $(i, j) \in [n]^2$, the probability that $A^{(1)}, \dots, A^{(D)}$ fails to be $\delta$-IB is less than $n^2 D^2 \exp(-2\delta^2 m / n^3) \le 1$, so a $\delta$-IB collection exists.
\end{proof}

\section{Proof of Main Theorem via Graph Collision Bounds}\label{sec:main-proof}

In this section, we state our two main graph-theoretic lemmas, which bound the probability that both endpoints of a random edge in the Kneser graph $J(m, q, 0)$ or the generalized Johnson graph $J(m, q, t)$ fall into a given vertex subset $\cT \subseteq \binom{[m]}{q}$, and show how they immediately imply Theorem~\ref{thm:main}. We then prove both graph collision lemmas in Section~\ref{sec:fourier-proofs}.

\subsection{Edge Collision Bounds in the Kneser and Generalized Johnson Graphs}\label{subsec:collision-lemmas}

Suppose $\cT \subseteq \binom{[m]}{q}$ is any set of vertices of density $\alpha := \frac{|\cT|}{\binom{m}{q}} \in (0, 1)$. If we sample two vertices $S, S' \sim \Unif\binom{[m]}{q}$ \emph{independently}, each lands in $\cT$ with probability $\alpha$, so both land in $\cT$ with probability $\alpha^2$. What if instead $(S, S') \sim E(J(m, q, t))$ is a uniformly random \emph{edge} of $J(m, q, t)$? In our lemmas below, we establish an upper bound on this ``collision probability'' for two regimes of the parameter $t$. Another way to interpret these bounds is that the generalized Johnson graphs (for these values of $t$) are \emph{(small-set) expanders}, i.e., no subset $\cT$ of density $\alpha$ contains too many edges of $J(m, q, t)$.

Our first lemma handles the Kneser graph $J(m, q, 0)$, and gives an upper bound of $O(\alpha^{19/18})$ on the collision probability.

\begin{lemma}\label{lem:kneser-collision}
For $m \ge 3q$ and any $\cT \subseteq \binom{[m]}{q}$ of density $\alpha := \frac{|\cT|}{\binom{m}{q}} \in (0, 1)$,
\[
  \Prb_{(S, S') \sim E(J(m, q, 0))}\!\left[S \in \cT \;\land\; S' \in \cT\right] \le 5 \, \alpha^{19/18}.
\]
\end{lemma}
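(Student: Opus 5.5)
We plan to prove Lemma~\ref{lem:kneser-collision} by coupling a uniformly random disjoint pair to a product channel on the $p$-biased hypercube $\{0,1\}^m$, and then applying hypercontractivity there.

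\textbf{The channel.} Fix a bias $p$ somewhat larger than $q/m$, say $p := \min(1.4\,q/m, 0.47)$; since $m \ge 3q$, this keeps $p \le 0.47 < 1/2$. Sample $x \in \{0,1\}^m$ with i.i.d.\ $\mathrm{Ber}(p)$ coordinates. Given $x$, set $y_g := 0$ whenever $x_g = 1$, and otherwise set $y_g := 1$ independently with probability $p/(1-p)$. Then $y$ is also $p$-biased, the supports of $x$ and $y$ are disjoint, and the channel $x \mapsto y$ is the $p$-biased noise operator $T_\rho$ with negative correlation $\rho = -p/(1-p)$. Here $|\rho| \le 0.47/0.53 < 0.9$, bounded away from $1$.

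\textbf{Extending the indicator.} To avoid the polynomial loss from conditioning on $|x| = |y| = q$, we extend $f := \I_{\cT}$ to the cube by
\[
F(x) := \Prb_{S \sim \Unif\binom{\mathrm{supp}(x)}{q}}[S \in \cT] \quad \text{when } |x| \ge q, \qquad F(x) := 0 \text{ otherwise}.
\]
Let $\mathcal{E}$ denote the event $|x| \ge q \land |y| \ge q$. By exchangeability, conditioned on $\mathcal{E}$ and on $(x,y)$, the independent random $q$-subsets $S \subseteq \mathrm{supp}(x)$ and $S' \subseteq \mathrm{supp}(y)$ form a uniformly random oriented edge of $J(m,q,0)$. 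Hence
\[
\Prb_{E(J(m,q,0))}[S \in \cT \land S' \in \cT] = \frac{\E[F(x)F(y)]}{\Prb[\mathcal{E}]}.
\]
By the same symmetry, $\E[F(x)] \le \alpha$, since conditioned on $|x| \ge q$ the set $S$ is uniform. Because $0 \le F \le 1$, this gives $\|F\|_r^r \le \E[F] \le \alpha$ for every $r \ge 1$.

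\textbf{Hypercontractivity.} Write $\E[F(x)F(y)] = \langle F, T_\rho F\rangle$. We will invoke the $p$-biased two-function hypercontractive inequality $\langle F, T_\rho G\rangle \le \|F\|_r \|G\|_r$, valid for $r - 1 \ge C(\lambda)\,\rho^2$ where $\lambda = \min(p, 1-p)$. This yields
\[
\E[F(x)F(y)] \le \alpha^{2/r}.
\]
Getting the exponent $19/18$ requires $r \le 36/19$, i.e.\ $r - 1 \le 17/19$. This is a very mild requirement, and we expect it can be met with $|\rho| < 0.9$ and $p$ not too small, either directly or after iterating the channel, since $T_\rho = T_{\sqrt{|\rho|}}\,T_{-\sqrt{|\rho|}}$ up to sign handling. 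If small $p$ degrades the constant $C(\lambda)$, we will instead reduce to the case $q \ge m/C$ by grouping coordinates, or accept a worse exponent at this stage and recheck the constants.

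\textbf{Main obstacle.} The step I expect to be hardest is bounding $\Prb[\mathcal{E}]$ from below by an absolute constant, uniformly over all $q$, together with the biased hypercontractivity constants in the small-$p$ regime.
\begin{itemize}
\item For large $q$, Chernoff bounds show $|x|, |y| \ge q$ with high probability, because the means are $1.4q$.
\item For small $q$ (e.g.\ $q = 1$), $\Prb[|x| \ge 1] \approx 1 - e^{-1.4}$ already. The joint event can be handled by first conditioning on $x$ and then bounding the probability that $y$, supported on the $m - |x|$ remaining coordinates with rate $p/(1-p)$, has at least $q$ ones.
\end{itemize}
With $\Prb[\mathcal{E}] \ge 1/5$, we obtain the claimed bound $5\alpha^{19/18}$.
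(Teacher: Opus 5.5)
Your architecture is the paper's: a product coupling on $\{0,1\}^m$ with disjoint supports, the random-$q$-subset extension $F$, the symmetry identity $\E[F(x)F(y)]=\Prb[\mathcal{E}]\cdot\Prb[S\in\cT\land S'\in\cT]$, hypercontractivity, and a constant lower bound on $\Prb[\mathcal{E}]$. The sign of $\rho$ is harmless. In $L^2$ one has $\|T_{-\sigma}G\|_2=\|T_\sigma G\|_2$, and the paper avoids the issue altogether via Cauchy--Schwarz and the two-step chain $x\to y\to x'$, which is $T_{\rho^2}$.

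However, the two quantitative steps that make up the lemma are not established, and the first one fails as you parametrized it. Because $m\ge 3q$ lets $p=1.4q/m$ reach $7/15$, you get $|\rho|=7/8$. The standard product-space bound (Proposition~\ref{prop:hypercontractivity}) gives exponent $19/18$ only when $|\rho|\le\frac{2}{\sqrt5}\eta^{1/18}$, using the split $\|F\|_2\|T_{|\rho|}F\|_2$ with $s=9/4$. Your symmetric split with $r=36/19$ needs $|\rho|\le\frac{17}{19}\eta^{1/18}$, which is essentially the same. At $\eta=7/15$ this threshold is about $0.857<7/8$, so ``bounded away from $1$'' is not enough: $19/18$ sits close to the limit. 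Conversely, your worry about small $p$ is misplaced. Since $|\rho|=p/(1-p)$, the condition $p/(1-p)\le\frac{2}{\sqrt5}p^{1/18}$ only gets easier as $p\to 0$. The binding case is $m=3q$, so grouping coordinates is unnecessary.

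The bound $\Prb[\mathcal{E}]\ge 1/5$, which you flag as the main obstacle, is left open. Your $q$-dependent bias is what makes it awkward. With $\E|x|=1.4q$, the optimized Chernoff bound on $\Prb[|x|\le q-1]$ exceeds $2/5$ for some small $q$ when $m$ is large; for example, at $q=2$ it tends to $2.8e^{-1.8}\approx 0.46$. A uniform bound would therefore need exact or Poisson-comparison tail estimates over a range of $q$.

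The missing simple idea is that nothing requires $|x|\approx q$, because $F$ samples a uniform $q$-subset of $\mathrm{supp}(x)$. So take a constant bias, as the paper does with $p=4/9$ (hence $\rho=-4/5$). This is legitimate because $m\ge 3q$ gives $\E|U|=4m/9\ge 4q/3$. Then:
\begin{itemize}
\item Since $\frac45<\frac{2}{\sqrt5}(4/9)^{1/18}\approx 0.855$, you get $\|T_{4/5}F\|_2\le\|F\|_{9/5}\le\alpha^{5/9}$, hence $\E[F(x)F(y)]\le\|F\|_2\|T_{4/5}F\|_2\le\alpha^{19/18}$.
\item You also get $\Prb[|U|\le q-1]\le(2/5)^{1-q}(11/15)^m\le\frac25\bigl(\frac52\cdot\frac{11^3}{15^3}\bigr)^q<\frac25$.
\item Both $x$ and $y$ are $4/9$-biased, so a union bound gives $\Prb[\mathcal{E}]>1/5$ with no conditioning.
\end{itemize}
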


Our second lemma handles the generalized Johnson graph $J(m, q, t)$ when the overlap $t$ is close to the expected overlap $q/n = m/n^2$ of two independent $q$-subsets, as guaranteed by a $\delta$-IB allocation family. It shows that in this regime, the edge collision probability in $J(m, q, t)$ is very close to the independent value $\alpha^2$:

\begin{lemma}\label{lem:johnson-ib-collision}
Let $t \in \{0, 1, \dots, q - 1\}$ and $\delta \in (0, 1/6]$ be such that $|t - q/n| \le \delta \frac{q}{n}$ (where $m = n q$ with $n \ge 3$). For any $\cT \subseteq \binom{[m]}{q}$ of density $\alpha := \frac{|\cT|}{\binom{m}{q}} \in (0, 1)$,
\[
  \Prb_{(S, S') \sim E(J(m, q, t))}\!\left[S \in \cT \;\land\; S' \in \cT\right] \le \alpha^2 + \alpha\!\left(\delta + \frac{50 n}{m}\right).
\]
\end{lemma}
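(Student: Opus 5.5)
We prove Lemma~\ref{lem:johnson-ib-collision} with the Expander Mixing Lemma, in the form of a spectral bound for the normalized adjacency operator of $J(m,q,t)$. Let $K_t$ be the Markov operator $K_t f(S) := \E_{S' : |S\cap S'|=t}[f(S')]$ acting on functions on $\binom{[m]}{q}$ with the uniform measure. Then
\[
  \Prb_{(S,S')\sim E(J(m,q,t))}[S\in\cT \land S'\in\cT] = \langle \I_\cT, K_t \I_\cT\rangle .
\]
$K_t$ lies in the Johnson association scheme. It is diagonalized by the standard decomposition $L^2 = V_0\oplus V_1\oplus\dots\oplus V_q$, where $V_k$ is spanned by functions of ``degree exactly $k$'', and its eigenvalue on $V_k$ is a normalized Eberlein (dual Hahn) polynomial $\lambda_k(t)$, with $\lambda_0=1$. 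Write $\I_\cT = \alpha + g$ with $g\perp V_0$ and $\|g\|^2 = \alpha-\alpha^2\le\alpha$. Then
\[
  \langle \I_\cT, K_t\I_\cT\rangle \le \alpha^2 + \Big(\max_{k\ge1}|\lambda_k(t)|\Big)\,\alpha ,
\]
so everything reduces to proving $\max_{k\ge 1}|\lambda_k(t)| \le \delta + 50n/m$ whenever $|t-q/n|\le \delta q/n$.

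For the eigenvalue bound, use the elementary characterization of $\lambda_k(t)$. Take a degree-$k$ ``test'' function, e.g. $f = \prod_{r=1}^k(\I[a_r\in S]-\I[b_r\in S])$ for distinct items $a_1,b_1,\dots,a_k,b_k$; this lies in $V_k$. Then $\lambda_k(t) = K_tf(S_0)/f(S_0)$ for any $S_0$ with $f(S_0)\neq0$. Pick $S_0$ containing all $a_r$ and no $b_r$, so $f(S_0)=1$. The quantity $K_tf(S_0)$ is an expectation over a uniformly random $S'$ with $|S'\cap S_0|=t$. Such an $S'$ is obtained by choosing a uniform $t$-subset of $S_0$ and a uniform $(q-t)$-subset of $[m]\setminus S_0$. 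Consequently, the indicators of $a_r\in S'$ behave like samples without replacement with rate $p_1 = t/q$, and those of $b_r\in S'$ with rate $p_2=(q-t)/(m-q)$.

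For $k=1$ this gives exactly
\[
  \lambda_1(t) = \frac{t}{q}-\frac{q-t}{m-q} = \frac{tm - q^2}{q(m-q)} .
\]
When $t=q/n$ we have $tm=q^2$, so $\lambda_1(q/n)=0$. In general $|\lambda_1(t)| \le |t-q/n|\,m/(q(m-q)) \le \delta\cdot \frac{m}{n(m-q)} = \delta\cdot\frac{1}{n-1}\le\delta$. For $k\ge2$ the product is approximately $\lambda_1^k$. The without-replacement correction terms are of order $k^2/q$ relative to the leading terms, which suggests $|\lambda_k(t)| \le |\lambda_1|^{k}+O(k/q)$. That is fine for small $k$ but too weak for large $k$.

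\textbf{Main obstacle.} The hard part is controlling $\lambda_k(t)$ uniformly over all $k$ up to $q$, not just the first few. I would first try to cite the known bound from~\cite{BCIM18,KR25}: for $t$ near $q^2/m$, every nontrivial eigenvalue of $J(m,q,t)$, normalized by the degree, is bounded by $|\lambda_1|+O(1/q)$, or more precisely is at most roughly $\max(|\lambda_1(t)|, O(n/m))$. Failing a citation, I would argue directly via monotonicity: $|\lambda_k(t)|$ is nonincreasing in $k$ up to a $O(1/q)$ additive error. One route is to write $K_t$ as a mixture of the ``$r$-step'' operators on $S_0$ and $[m]\setminus S_0$ (a product of two slices). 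Then $\lambda_k$ can be written as an average of products $\prod(p_1-p_2)$-type terms, and its modulus is controlled by $\lambda_1$ up to the $O(n/m)$ error coming from $1/q = n/m$. Tracking constants to get $50n/m$ and using $\delta\le1/6$ and $n\ge 3$ (so that $m-q\ge 2m/3$ and similar denominators are harmless) completes the proof.
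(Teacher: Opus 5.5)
Your framework matches the paper's. The Expander Mixing Lemma reduction to $\max_{k\ge1}|\lambda_k(t)|$ is correct, and your test function $\prod_r(\I[a_r\in S]-\I[b_r\in S])$ is harmonic of degree $k$. Evaluating $K_tf(S_0)$ as you describe produces exactly the Eberlein formula the paper uses,
\[
  \lambda_k(t)=\sum_{h=0}^k(-1)^{k-h}\binom{k}{h}\frac{(t)_h}{(q)_h}\,\frac{(q-t)_{k-h}}{(m-q)_{k-h}},
\]
and $|\lambda_1(t)|\le\delta/(n-1)$ as you computed. However, the uniform bound over all $1\le k\le q$ is the whole content of the lemma beyond the mixing lemma, and you leave it as an unresolved ``main obstacle''. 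The hoped-for citation is not pinned down, and the lemma needs an explicit constant. The claim that $|\lambda_k|$ is nonincreasing up to $O(1/q)$, and the ``mixture of $r$-step operators'' route, are assertions rather than arguments. As written, the proof has a genuine gap exactly there.

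The missing idea is that the without-replacement errors carry the geometric weights of the terms they perturb, so they are not merely $O(k^2/q)$ additively. Telescoping gives, for integers $1\le a<b$ and $p=a/b$, the bound $0\le p^s-(a)_s/(b)_s\le\binom{s}{2}p^{s-1}/b$. This holds even for $s>a$, where the falling factorial vanishes. Apply it to both factors (with $b=q$ and $b=m-q\ge q$), use $\binom{k}{h}\binom{h}{2}\le\binom{k}{2}\binom{k-1}{h-1}$, and sum with the binomial theorem to get
\[
  \bigl|\lambda_k(t)-(p_1-p_2)^k\bigr|\le\frac{k(k-1)}{q}\,(p_1+p_2)^{k-1},
  \qquad p_1:=\frac{t}{q},\quad p_2:=\frac{q-t}{m-q}.
\]
Taking absolute values destroys the cancellation of the alternating sum, so the error is governed by $p_1+p_2$ rather than $|p_1-p_2|$. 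The key point is that $p_1+p_2$ is still bounded away from $1$. Since $p_2=(1-p_1)/(n-1)\le(1-p_1)/2$ and $p_1\le(1+\delta)/n\le 7/18$, we get $p_1+p_2\le\frac12+\frac{p_1}{2}<\frac34$. This is where $n\ge3$ and $\delta\le1/6$ genuinely enter, beyond keeping denominators harmless. Then $k(k-1)(3/4)^{k-1}<50$ for every $k$, so $|\lambda_k(t)|\le|p_1-p_2|^k+50/q\le\delta+50n/m$ uniformly in $k$. This is precisely the paper's argument. Your heuristic $|\lambda_k|\le|\lambda_1|^k+O(k/q)$ discarded the decaying factor $(p_1+p_2)^{k-1}$, which is why it looked too weak for large $k$.
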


\subsection{Proof of Theorem~\ref{thm:main} from Lemmas~\ref{lem:kneser-collision} and~\ref{lem:johnson-ib-collision}}\label{subsec:main-proof-derivation}

We now show how Lemmas~\ref{lem:ib-existence}, \ref{lem:kneser-collision}, and~\ref{lem:johnson-ib-collision} combine via the second moment method to prove Theorem~\ref{thm:main}.

\paragraph{Parameters and Setup.}
Given $n \ge 3$ and $m = n q$, set
\begin{equation}\label{eq:eps-choice}
  \varepsilon := \frac{30 n^2}{m^{1/180}}.
\end{equation}
Note that if $\varepsilon > 1/2$, Theorem~\ref{thm:main} holds trivially (as $1 < 2\varepsilon$). We may therefore assume henceforth that $\varepsilon \in (0, 1/2]$ (which in particular implies $m \ge (60 n^2)^{180} \ge 540^{180}$), and show that the failure probability is at most $\varepsilon$.

Let $K := \left\lfloor m^{-1/10} \binom{m}{q} \right\rfloor$ and define the top-tier density $\alpha := K / \binom{m}{q}$. Since $1 \le q = m/n \le m - 1$, we have $\binom{m}{q} \ge m$ and hence $m^{-1/10}\binom{m}{q} \ge m^{9/10} > 2$; therefore $1 \le K < \binom{m}{q}$, and $\alpha \in (0, 1)$ satisfies
\begin{equation}\label{eq:alpha-sandwich}
  \frac{1}{2} m^{-1/10} \le \alpha \le m^{-1/10} = \left(\frac{\varepsilon}{30 n^2}\right)^{18}.
\end{equation}

Recall that, without loss of generality, each $v_i$ takes distinct values on $\binom{[m]}{q}$. For each agent $i \in [n]$, let $\cT_i \subseteq \binom{[m]}{q}$ with $|\cT_i| = K = \alpha \binom{m}{q}$ denote the set of the $K$ subsets from $\binom{[m]}{q}$ with the largest values under $v_i$. For any balanced allocation $B = (B_1, \dots, B_n) \in \cU$, we define the indicator $\cG_i(B) \in \{0, 1\}$ of the event that $B$ is \emph{good for agent $i$} by
\begin{equation}\label{eq:Gi-def}
  \cG_i(B) := \I\!\left[B_i \in \cT_i \;\land\; \forall j \in [n] \setminus \{i\}, \; B_j \notin \cT_i\right].
\end{equation}
Notice that this event suffices for envy-freeness: for any allocation $A = (A_1, \dots, A_n) \in \cU$ and any permutations $\boldsymbol{\pi} = (\pi_1, \dots, \pi_n) \in S_m^n$, if $\cG_i(\pi_i^{-1}(A)) = 1$, then agent $i$'s renamed bundle $\pi_i^{-1}(A_i)$ belongs to her top tier $\cT_i$, whereas every other agent's renamed bundle $\pi_i^{-1}(A_j)$ ($j \ne i$) lies outside $\cT_i$. Since every subset in $\cT_i$ has strictly larger value under $v_i$ than every subset outside $\cT_i$, we have $v_i(\pi_i^{-1}(A_i)) > v_i(\pi_i^{-1}(A_j))$ for all $j \ne i$. Therefore:
\begin{equation}\label{eq:G-implies-EF}
  \prod_{i=1}^n \cG_i\!\left(\pi_i^{-1}(A)\right) = 1 \quad \Longrightarrow \quad A \text{ is envy-free under } \boldsymbol{\pi}.
\end{equation}

\paragraph{First-Moment Lower Bound via Lemma~\ref{lem:kneser-collision}.}
We first use the Kneser graph collision bound (Lemma~\ref{lem:kneser-collision}) to show that for any fixed balanced allocation $A \in \cU$, the probability that $\cG_i(\pi_i^{-1}(A)) = 1$ for all $i \in [n]$ is very close to $\alpha^n$:

\begin{lemma}[First-Moment Bound]\label{lem:first-moment}
Let $\varepsilon \in (0, 1/2]$ and $\alpha \in (0, 1)$ be as in~\eqref{eq:eps-choice} and~\eqref{eq:alpha-sandwich}. For any fixed balanced allocation $A \in \cU$, when $\pi_1, \dots, \pi_n \sim \Unif(S_m)$ are drawn independently, the probability
\[
  \mu := \Prb_{\boldsymbol{\pi} \sim \Unif(S_m^n)}\!\left[\prod_{i=1}^n \cG_i\!\left(\pi_i^{-1}(A)\right) = 1\right] = \prod_{i=1}^n \E_{B \sim \Unif(\cU)}[\cG_i(B)]
\]
does not depend on $A$ and satisfies
\begin{equation}\label{eq:mu-lower}
  \mu \ge \left(1 - \frac{\varepsilon}{6n}\right)^n \alpha^n \ge \frac{1}{2}\alpha^n, \qquad \left(\frac{\alpha^n}{\mu}\right)^2 \le \frac{1}{1 - \varepsilon/3} \le 1 + \frac{5\varepsilon}{12}.
\end{equation}
\end{lemma}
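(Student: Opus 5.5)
The plan is to follow the first-moment reduction from the proof overview. We proceed in four steps: the product formula, a per-agent lower bound, a check of the numerical constant, and the final chain of inequalities.

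\textbf{Step 1: product formula.} For a fixed $A \in \cU$ and a uniform $\pi_i \sim \Unif(S_m)$, the renamed allocation $\pi_i^{-1}(A)$ is uniform on $\cU$, because $S_m$ acts transitively on $\cU$ with equal-size stabilizers. The permutations $\pi_1,\dots,\pi_n$ are independent, and each factor $\cG_i(\pi_i^{-1}(A))$ depends only on $\pi_i$. Hence the joint probability factors as $\prod_i \E_{B\sim\Unif(\cU)}[\cG_i(B)]$, which visibly does not depend on $A$.

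\textbf{Step 2: per-agent lower bound.} Fix $i$ and bound each factor from below. Since
\[
\cG_i(B) \ge \I[B_i\in\cT_i] - \sum_{j\ne i}\I[B_i\in\cT_i \land B_j\in\cT_i],
\]
we get
\[
\E[\cG_i(B)] \ge \alpha - (n-1)\,\Prb_{(S,S')\sim E(J(m,q,0))}[S\in\cT_i\land S'\in\cT_i].
\]
Here $B_i$ is uniform on $\binom{[m]}{q}$. For $j\ne i$, the pair $(B_i,B_j)$ is a uniformly random ordered pair of disjoint $q$-sets, which is exactly a uniform oriented edge of the Kneser graph. To apply Lemma~\ref{lem:kneser-collision} we need $m\ge 3q$, which holds since $n\ge 3$. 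The lemma then gives
\[
\E[\cG_i(B)] \ge \alpha - 5(n-1)\alpha^{19/18} = \alpha\bigl(1 - 5(n-1)\alpha^{1/18}\bigr).
\]

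\textbf{Step 3: the constant.} This is the only step needing care. By~\eqref{eq:alpha-sandwich}, $\alpha^{1/18} \le \varepsilon/(30n^2)$, so
\[
5(n-1)\alpha^{1/18} \le \frac{5n\varepsilon}{30n^2} = \frac{\varepsilon}{6n}.
\]
Taking the product over $i$ yields $\mu \ge (1-\varepsilon/(6n))^n\alpha^n$. By Bernoulli's inequality, $(1-\varepsilon/(6n))^n \ge 1-\varepsilon/6 \ge 1/2$, using $\varepsilon\le 1/2$.

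\textbf{Step 4: the squared ratio.} From Step 3,
\[
\Bigl(\frac{\alpha^n}{\mu}\Bigr)^2 \le (1-\varepsilon/6)^{-2}.
\]
Since $(1-\varepsilon/6)^2 \ge 1-\varepsilon/3$, this is at most $1/(1-\varepsilon/3)$. Finally, for $x=\varepsilon/3\le 1/6$ we have $1/(1-x) \le 1 + x/(1-x) \le 1 + \tfrac{6}{5}x$, and so
\[
\frac{1}{1-\varepsilon/3} \le 1 + \frac{6}{5}\cdot\frac{\varepsilon}{3} = 1+\frac{2\varepsilon}{5} \le 1+\frac{5\varepsilon}{12}.
\]
The substantive input is entirely Lemma~\ref{lem:kneser-collision}. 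The rest is verifying the uniform-edge distribution claim in Step 2 and these elementary inequalities.
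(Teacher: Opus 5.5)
Your proposal is correct and follows essentially the same route as the paper: independence and uniformity of $\pi_i^{-1}(A)$, a union bound over $j \ne i$ combined with Lemma~\ref{lem:kneser-collision} (using $m \ge 3q$) and $\alpha^{1/18} \le \varepsilon/(30n^2)$, then Bernoulli's inequality. The only differences are cosmetic. You obtain the squared-ratio bound via $(1-\varepsilon/6)^2 \ge 1-\varepsilon/3$, whereas the paper uses $(1-\frac{\varepsilon}{6n})^{2n} \ge 1-\frac{\varepsilon}{3}$. You also explicitly check the final step $\frac{1}{1-\varepsilon/3} \le 1+\frac{5\varepsilon}{12}$, which the paper leaves implicit.
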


\begin{proof}
Since $\pi_1, \dots, \pi_n$ are independent and each $\pi_i^{-1}(A)$ is distributed uniformly over $\cU$, the events $\{\cG_i(\pi_i^{-1}(A)) = 1\}_{i=1}^n$ are independent, with respective probabilities $\E_{B \sim \Unif(\cU)}[\cG_i(B)]$. Fix any agent $i \in [n]$ and draw $B = (B_1, \dots, B_n) \sim \Unif(\cU)$. Then $B_i \sim \Unif\binom{[m]}{q}$ has $\Prb[B_i \in \cT_i] = \frac{|\cT_i|}{\binom{m}{q}} = \alpha$. Moreover, for each $j \in [n] \setminus \{i\}$, the bundles $B_i$ and $B_j$ are disjoint $q$-subsets, so the pair $(B_i, B_j)$ is a uniformly random edge of the Kneser graph $J(m, q, 0)$. By the union bound over $j \in [n] \setminus \{i\}$ and Lemma~\ref{lem:kneser-collision} applied to $\cT = \cT_i$,
\begin{align*}
  \E_{B \sim \Unif(\cU)}[\cG_i(B)]
  &= \Prb_{B \sim \Unif(\cU)}\!\left[B_i \in \cT_i \;\land\; \forall j \ne i, \; B_j \notin \cT_i\right] \\
  &\ge \Prb\!\left[B_i \in \cT_i\right] - \sum_{j \ne i} \Prb_{(B_i, B_j) \sim E(J(m, q, 0))}\!\left[B_i \in \cT_i \;\land\; B_j \in \cT_i\right] \\
  &\ge \alpha - (n - 1) \cdot 5\alpha^{19/18} = \alpha\left(1 - 5(n - 1)\alpha^{1/18}\right) \\
  &\ge \alpha\left(1 - \frac{\varepsilon}{6n}\right),
\end{align*}
where the last inequality uses $\alpha^{1/18} \le \frac{\varepsilon}{30n^2}$ from~\eqref{eq:alpha-sandwich}. Multiplying over $i \in [n]$ gives $\mu \ge (1 - \frac{\varepsilon}{6n})^n \alpha^n \ge (1 - \frac{\varepsilon}{6})\alpha^n \ge \frac{1}{2}\alpha^n$ and $(\mu / \alpha^n)^2 \ge (1 - \frac{\varepsilon}{6n})^{2n} \ge 1 - \frac{\varepsilon}{3}$, which yields~\eqref{eq:mu-lower} for $\varepsilon \in (0, 1/2]$.
\end{proof}

\paragraph{Second-Moment Upper Bound and Completion of the Proof of Theorem~\ref{thm:main}.}
We now combine Lemma~\ref{lem:ib-existence}, Lemma~\ref{lem:johnson-ib-collision}, and Lemma~\ref{lem:first-moment} with the Paley--Zygmund inequality.

\begin{proof}[Proof of Theorem~\ref{thm:main}]
Let $\varepsilon = \frac{30 n^2}{m^{1/180}} \in (0, 1/2]$, $\alpha$, and $\mu$ be as in~\eqref{eq:eps-choice}, \eqref{eq:alpha-sandwich}, and Lemma~\ref{lem:first-moment}. Set
\[
  D := \left\lceil \frac{12}{\varepsilon \mu} \right\rceil \le \left\lceil \frac{24}{\varepsilon \alpha^n} \right\rceil \qquad \text{and} \qquad \delta := \frac{\varepsilon \alpha}{16 n} \le \frac{1}{6},
\]
so that $\frac{1}{D\mu} \le \frac{\varepsilon}{12}$. Let us verify that the condition $m > \frac{n^3}{\delta^2}\ln(n D)$ of Lemma~\ref{lem:ib-existence} holds. By~\eqref{eq:eps-choice} and~\eqref{eq:alpha-sandwich}, we have $\delta = \frac{\varepsilon \alpha}{16 n} \ge \frac{30 n^2}{16 n} m^{-1/180} \cdot \frac{1}{2} m^{-1/10} > \frac{1}{2} n m^{-19/180}$, so $\frac{n^3}{\delta^2} < 4 n m^{19/90}$. Moreover, since $\varepsilon \alpha^n > 25 n \, m^{-n}$, we have $n D \le n\!\left(\frac{24}{\varepsilon \alpha^n} + 1\right) \le m^n$ and thus $\ln(n D) \le n \ln m$. Using $4 n^2 < 60 n^2 \le m^{1/180}$ (from $\varepsilon \le 1/2$), we obtain
\[
  \frac{n^3}{\delta^2} \ln(n D) < 4 n^2 m^{19/90} \ln m < m^{1/180 + 38/180} \ln m < m^{1/4} \ln m < m.
\]

By Lemma~\ref{lem:ib-existence}, there exists a $\delta$-IB collection of balanced allocations $A^{(1)}, \dots, A^{(D)} \in \cU$. For each $k \in [D]$, let $\mathcal{E}_k$ denote the event that $\prod_{i=1}^n \cG_i(\pi_i^{-1}(A^{(k)})) = 1$, and define the counting random variable $Z := \sum_{k=1}^D \I[\mathcal{E}_k]$.

By~\eqref{eq:G-implies-EF}, whenever $Z > 0$, at least one allocation $A^{(k)}$ is envy-free under $\boldsymbol{\pi}$. By Lemma~\ref{lem:first-moment}, $\Prb_{\boldsymbol{\pi}}[\mathcal{E}_k] = \mu$ for every $k \in [D]$, so the first moment of $Z$ is $\E[Z] = D \mu$.

To bound the second moment $\E[Z^2] = \sum_{k=1}^D \Prb[\mathcal{E}_k] + \sum_{k \ne \ell} \Prb[\mathcal{E}_k \land \mathcal{E}_\ell]$, consider any two distinct indices $k \ne \ell \in [D]$. Observe from~\eqref{eq:Gi-def} that $\cG_i(B) \le \I[B_i \in \cT_i]$ holds pointwise (obtained by simply dropping the conditions $B_j \notin \cT_i$ for $j \ne i$). Therefore, if $\mathcal{E}_k \land \mathcal{E}_\ell$ occurs, then for every agent $i \in [n]$, both $\pi_i^{-1}(A_i^{(k)}) \in \cT_i$ and $\pi_i^{-1}(A_i^{(\ell)}) \in \cT_i$ must hold. Since $\pi_1, \dots, \pi_n \sim \Unif(S_m)$ are independent and $\pi_i^{-1}$ maps the pair $(A_i^{(k)}, A_i^{(\ell)})$ to a uniformly random edge $(S, S') \sim E(J(m, q, t_i))$ where $t_i := |A_i^{(k)} \cap A_i^{(\ell)}|$, we obtain
\begin{align}\label{eq:pair-factor}
  \Prb_{\boldsymbol{\pi}}\!\left[\mathcal{E}_k \land \mathcal{E}_\ell\right] &\le \prod_{i=1}^n \Prb_{\pi_i}\!\left[\pi_i^{-1}(A_i^{(k)}) \in \cT_i \;\land\; \pi_i^{-1}(A_i^{(\ell)}) \in \cT_i\right] \nonumber \\
  &= \prod_{i=1}^n \Prb_{(S, S') \sim E(J(m, q, t_i))}\!\left[S \in \cT_i \;\land\; S' \in \cT_i\right].
\end{align}
Because $A^{(k)}$ and $A^{(\ell)}$ are $\delta$-IB, each overlap $t_i = |A_i^{(k)} \cap A_i^{(\ell)}|$ satisfies $|t_i - q/n| \le \delta \frac{q}{n}$. Moreover, by~\eqref{eq:eps-choice}, \eqref{eq:alpha-sandwich}, and $m \ge 540^{180}$, we have
\[
  \frac{50 n}{\alpha m} \le \frac{100 n}{m^{9/10}} < \frac{n}{m^{1/10}} < \frac{\varepsilon}{16 n}.
\]
Applying Lemma~\ref{lem:johnson-ib-collision} to each $\cT_i$ (which has exact density $\alpha$) therefore gives, for every $i \in [n]$,
\begin{align*}
  \Prb_{(S, S') \sim E(J(m, q, t_i))}\!\left[S \in \cT_i \;\land\; S' \in \cT_i\right] 
  \le \alpha^2\!\left(1 + \frac{\delta}{\alpha} + \frac{50 n}{\alpha m}\right)
  \le \alpha^2\!\left(1 + \frac{\varepsilon}{8 n}\right).
\end{align*}
Substituting this bound into~\eqref{eq:pair-factor} yields
\[
  \Prb_{\boldsymbol{\pi}}\!\left[\mathcal{E}_k \land \mathcal{E}_\ell\right] \le \alpha^{2n}\!\left(1 + \frac{\varepsilon}{8n}\right)^n \le \alpha^{2n} e^{\varepsilon / 8} \le \alpha^{2n}\!\left(1 + \frac{\varepsilon}{6}\right).
\]
Summing over all $D(D-1) < D^2$ off-diagonal pairs $(k, \ell)$ gives
\[
  \E[Z^2] = D\mu + \sum_{k \ne \ell} \Prb_{\boldsymbol{\pi}}\!\left[\mathcal{E}_k \land \mathcal{E}_\ell\right] \le D\mu + D^2 \alpha^{2n}\!\left(1 + \frac{\varepsilon}{6}\right).
\]
Finally, by the Paley--Zygmund inequality and Lemma~\ref{lem:first-moment},
\begin{align*}
  \Prb_{\boldsymbol{\pi}}\!\left[\text{an EF allocation exists}\right] \ge \Prb_{\boldsymbol{\pi}}[Z > 0] \ge \frac{(\E[Z])^2}{\E[Z^2]} \ge \left(\frac{1}{D\mu} + \left(\frac{\alpha^n}{\mu}\right)^2 \!\left(1 + \frac{\varepsilon}{6}\right)\right)^{-1}.
\end{align*}
Using $\frac{1}{D\mu} \le \frac{\varepsilon}{12}$ and $(\alpha^n / \mu)^2 \le 1 + \frac{5\varepsilon}{12}$ from~\eqref{eq:mu-lower}, the right-hand side is at least
\[
  \left(\frac{\varepsilon}{12} + \left(1 + \frac{5\varepsilon}{12}\right)\!\left(1 + \frac{\varepsilon}{6}\right)\right)^{-1} \ge \frac{1}{1 + \varepsilon} \ge 1 - \varepsilon.
\]
Consequently, the probability that no envy-free allocation exists is at most $\varepsilon = \frac{30 n^2}{m^{1/180}} = O\!\left(\frac{n^2}{m^{1/180}}\right)$, completing the proof of Theorem~\ref{thm:main}.
\end{proof}

\section{Proofs of the Graph Collision Lemmas}\label{sec:fourier-proofs}

It remains to prove Lemmas~\ref{lem:kneser-collision} and \ref{lem:johnson-ib-collision}. In Section~\ref{subsec:proof-kneser}, we prove Lemma~\ref{lem:kneser-collision} by embedding a random edge of $J(m, q, 0)$ into an i.i.d.\ disjoint-subset channel on $\{0, 1\}^m$ and applying standard hypercontractivity on product spaces~\cite{Bonami70,Beckner75,ODonnell14}; in Section~\ref{subsec:proof-johnson-ib}, we prove Lemma~\ref{lem:johnson-ib-collision} directly from the Expander Mixing Lemma~\cite{AC88,HLW06} and the known spectrum of $J(m, q, t)$~\cite{BCIM18,KR25}.

\subsection{Proof of Lemma~\ref{lem:kneser-collision}}\label{subsec:proof-kneser}

One can obtain small-set expansion in $J(m, q, 0)$ via logarithmic Sobolev and hypercontractive inequalities on the Johnson slice $\binom{[m]}{q}$~\cite{LY98,DSC96,FOW19,KK20}. Here we instead give a simpler reduction directly to standard hypercontractivity on an i.i.d.\ product space~\cite{Bonami70,Beckner75,ODonnell14}.

\begin{proposition}[Hypercontractivity Theorem~{\cite[Corollary~10.20]{ODonnell14}}]\label{prop:hypercontractivity}
Let $(\Omega, \gamma)$ be a finite probability space with $|\Omega| \ge 2$ in which every outcome has probability at least $\eta > 0$, with associated $L^p$ norm $\|f\|_p := (\E_{x \sim \gamma^{\otimes m}}[|f(x)|^p])^{1/p}$ ($p \ge 1$). For $\rho \in [0, 1]$, let $(T_\rho f)(x) := \E_y[f(y)]$ denote the noise operator, where each coordinate $y_g$ independently equals $x_g$ with probability $\rho$ and is drawn from $\gamma$ with probability $1 - \rho$. Then for every $s > 2$, every $0 \le \rho \le \frac{1}{\sqrt{s - 1}}\,\eta^{\frac{1}{2} - \frac{1}{s}}$, and every $f : \Omega^m \to \R$,
\[
  \|T_\rho f\|_2 \le \|f\|_{s'}, \qquad \text{where } s' := \frac{s}{s - 1}.
\]
\end{proposition}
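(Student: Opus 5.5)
Since the statement is cited from O'Donnell, I only plan a self-contained route, and I flag one step that I would import rather than reprove.

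\emph{Duality.} The operator $T_\rho$ is self-adjoint on $L^2(\gamma^{\otimes m})$. By Hölder duality, the bound $\|T_\rho f\|_2 \le \|f\|_{s'}$ for all $f$ says that $T_\rho : L^{s'} \to L^2$ has norm at most $1$. This is equivalent to its adjoint $T_\rho : L^2 \to L^s$ having norm at most $1$. So I would prove the forward $(2,s)$ form instead:
\[
  \|T_\rho g\|_s \le \|g\|_2 \quad \text{for all } g : \Omega^m \to \R .
\]

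\emph{Tensorization.} Next I would reduce to $m = 1$. The operator $T_\rho$ on $\Omega^m$ is the tensor product $T^{(1)}_\rho \otimes \cdots \otimes T^{(m)}_\rho$ of the one-coordinate operators. The standard tensorization lemma says that if each factor is $(2,s)$-hypercontractive with $2 \le s$, then so is the product. Its proof is an induction on $m$: apply the one-coordinate bound in the last variable with the others frozen. Then swap the order of the $L^s$ and $L^2$ norms using Minkowski's integral inequality, which is valid since $s/2 \ge 1$. Finally apply the induction hypothesis to the remaining $m-1$ coordinates.

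\emph{One coordinate.} Write $g = \mu + h$ with $\mu = \E_\gamma[g]$ and $\E_\gamma[h] = 0$, so that $T_\rho g = \mu + \rho h$ and $\|g\|_2^2 = \mu^2 + \|h\|_2^2$. Since every atom of $\gamma$ has mass at least $\eta$, we have $\|h\|_\infty \le \eta^{-1/2}\|h\|_2$. Interpolating gives
\[
  \|h\|_s \le \|h\|_2^{2/s}\,\|h\|_\infty^{1-2/s} \le \eta^{-(\frac12 - \frac1s)}\,\|h\|_2 .
\]
The hypothesis $\rho \le \frac{1}{\sqrt{s-1}}\,\eta^{\frac12 - \frac1s}$ therefore yields $(s-1)\,\|\rho h\|_s^2 \le \|h\|_2^2$. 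It then suffices to establish the mean-zero smoothness inequality
\[
  \|\mu + X\|_s^2 \le \mu^2 + (s-1)\,\|X\|_s^2 \qquad \text{for constant } \mu,\ \E[X] = 0,\ s \ge 2,
\]
applied with $X = \rho h$. Chaining the two displays gives $\|T_\rho g\|_s^2 \le \mu^2 + \|h\|_2^2 = \|g\|_2^2$, as required.

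\emph{Main obstacle.} The hard step is the last displayed inequality: a one-sided, sharp-constant form of the $2$-uniform smoothness of $L^s$ for $s \ge 2$. Symmetrizing $X$ with a random sign and invoking Ball--Carlen--Lieb's two-sided inequality $\tfrac12\left(\|a+b\|_s^2 + \|a-b\|_s^2\right) \le \|a\|_s^2 + (s-1)\|b\|_s^2$ loses a constant factor, so that route does not close the argument. I would instead import the known martingale version (Pinelis; Ricard--Xu), which states exactly this inequality for a constant plus a mean-zero term. If only a constant-factor loss were acceptable, the symmetrization route would suffice up to shrinking $\rho$ by an absolute constant. Alternatively, one can simply take the whole proposition as given from O'Donnell's Corollary~10.20, which is how the paper uses it.
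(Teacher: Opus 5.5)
Your argument is correct. The paper gives no proof of this proposition: it imports O'Donnell's Corollary~10.20 as a black box and uses it only once, with $\gamma = \mathrm{Bern}(4/9)$, $s = 9/4$, $\rho = 4/5$. Your route is therefore a self-contained alternative rather than a variant of anything in the paper. All three of your reductions check out. First, $T_\rho$ is self-adjoint, so the $(s',2)$ and $(2,s)$ bounds are dual. Second, the $(2,s)$ bound tensorizes via Minkowski in $L^{s/2}$. Third, on one coordinate, $T_\rho g = \mu + \rho h$ together with $\|h\|_\infty \le \eta^{-1/2}\|h\|_2$ and interpolation gives exactly $(s-1)\rho^2\|h\|_s^2 \le \|h\|_2^2$, so you recover the cited constant with no loss.

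The one thing I would change is your assessment of the ``main obstacle''. In the commutative, full-expectation case you need, $\|\mu + X\|_s^2 \le \mu^2 + (s-1)\|X\|_s^2$ has a short calculus proof, so nothing needs to be imported from Pinelis or Ricard--Xu. If $\mu = 0$ it is trivial. Otherwise $N(t) := \E|\mu + tX|^s > 0$ for all $t$, since $\E[\mu + tX] = \mu \ne 0$. Hence $\phi(t) := N(t)^{2/s}$ is $C^2$ on the finite space. Moreover $\phi'(0) = 2N(0)^{2/s-1}|\mu|^{s-2}\mu\,\E[X] = 0$, and
\[
  \phi''(t) = \tfrac{2}{s}\big(\tfrac{2}{s} - 1\big) N^{\frac{2}{s}-2} (N')^2 + 2(s-1) N^{\frac{2}{s}-1}\, \E\big[|\mu + tX|^{s-2} X^2\big] \le 2(s-1)\|X\|_s^2.
\]
The first term is nonpositive for $s \ge 2$. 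For the second, H\"older with exponents $\frac{s}{s-2}, \frac{s}{2}$ gives $\E[|\mu + tX|^{s-2}X^2] \le N^{1 - 2/s}\|X\|_s^2$. Taylor's theorem then yields $\phi(1) \le \phi(0) + \phi'(0) + \tfrac{1}{2}\sup_t \phi''(t) \le \mu^2 + (s-1)\|X\|_s^2$.

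With this short lemma in place your proof is complete and fully elementary. Your side remark that the Ball--Carlen--Lieb symmetrization route would lose an absolute constant in $\rho$ is also accurate.
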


\begin{proof}[Proof of Lemma~\ref{lem:kneser-collision}]
Let $\mathcal{Q}$ be the symmetric distribution on $\{0, 1\}^2$ defined by
\[
  \mathcal{Q}(1, 0) = \frac{4}{9}, \quad \mathcal{Q}(0, 1) = \frac{4}{9}, \quad \mathcal{Q}(0, 0) = \frac{1}{9}, \quad \mathcal{Q}(1, 1) = 0.
\]
Draw $(x_1, y_1), \dots, (x_m, y_m) \stackrel{\text{i.i.d.}}{\sim} \mathcal{Q}$, so each coordinate has marginal $\gamma := \mathrm{Bern}(4/9)$, and the subsets $U := \{g \in [m] : x_g = 1\}$ and $W := \{g \in [m] : y_g = 1\}$ are always disjoint (since $\mathcal{Q}(1, 1) = 0$). Identifying subsets of $[m]$ with their $\{0, 1\}^m$ indicator vectors, define $f : \{0, 1\}^m \to [0, 1]$ by
\[
  f(U) := \begin{cases}
    \Prb_{S \sim \Unif\binom{U}{q}}\!\left[S \in \cT\right] & \text{if } |U| \ge q, \\
    0 & \text{if } |U| < q.
  \end{cases}
\]
By symmetry of $\mathcal{Q}^{\otimes m}$, $S \sim \Unif\binom{U}{q}$ conditioned on $|U| \ge q$ is uniformly distributed over $\binom{[m]}{q}$, so $\E[f(U)] = \Prb[|U| \ge q] \cdot \alpha \le \alpha$. Conditioned on $|U| \ge q$ and $|W| \ge q$, independent draws $S \sim \Unif\binom{U}{q}$ and $S' \sim \Unif\binom{W}{q}$ form a uniformly random edge of $J(m, q, 0)$, giving
\begin{align}\label{eq:channel-identity}
  \E\big[f(U) f(W)\big] &= \Prb\big[|U| \ge q \;\land\; |W| \ge q\big] \cdot \Prb_{(S, S') \sim E(J(m, q, 0))}\!\left[S \in \cT \;\land\; S' \in \cT\right].
\end{align}
Since $|U| \sim \mathrm{Bin}(m, 4/9)$ and $m \ge 3q$, we have
\begin{align*}
\Prb[|U| \leq q - 1] \leq \E\!\left[\left(\frac{2}{5}\right)^{|U| - (q - 1)}\right] = \left(\frac{2}{5}\right)^{1 - q}\!\left(\frac{5}{9} + \frac{4}{9}\cdot\frac{2}{5}\right)^m \le \frac{2}{5}\left(\frac{5}{2}\cdot\frac{11^3}{15^3}\right)^q < \frac{2}{5},
\end{align*}
and symmetrically $\Prb[|W| \le q - 1] < \frac{2}{5}$, so $\Prb[|U| \ge q \land |W| \ge q] > 1 - \frac{4}{5} = \frac{1}{5}$ by a union bound.

To upper bound $\E[f(U) f(W)]$, draw $x' = (x'_1, \dots, x'_m)$ conditionally independent of $x$ given $y$ with $(x', y) \sim \mathcal{Q}^{\otimes m}$, and let $U' := \{g \in [m] : x'_g = 1\}$. Then $\E[f(U) \mid W] = \E[f(U') \mid W]$, so $\E[\E[f(U) \mid W]^2] = \E[f(U) f(U')]$, and Cauchy--Schwarz inequality gives
\[
  \E\big[f(U) f(W)\big] = \E\big[\E[f(U) \mid W] \, f(W)\big] \le \|f\|_2 \sqrt{\E\big[f(U) f(U')\big]}.
\]
In each coordinate $g \in [m]$, both $x_g$ and $x'_g$ have marginal $\gamma = \mathrm{Bern}(4/9)$ (with minimum atom probability $\eta = \frac{4}{9}$), and since $x_g = x'_g = 1$ requires $y_g = 0$ (probability $\frac{5}{9}$), we have
\[
  \Prb[x_g = 1 \;\land\; x'_g = 1] = \frac{5}{9}\left(\frac{4}{5}\right)^2 = \frac{16}{45} = \left(\frac{4}{5}\right)^2 \cdot\frac{4}{9} + \left(1 - \left(\frac{4}{5}\right)^2 \right)\left(\frac{4}{9}\right)^2.
\]
Hence $(x, x')$ may be generated by drawing $z \sim \gamma^{\otimes m}$ and applying $T_{4/5}$ independently twice to obtain $x$ and $x'$; thus $\E[f(U) f(U')] = \|T_{4/5} f\|_2^2$. Applying Proposition~\ref{prop:hypercontractivity} with $s = \frac{9}{4}$ ($s' = \frac{9}{5}$), since
$\frac{1}{\sqrt{9/4 - 1}}\left(\frac{4}{9}\right)^{\frac{1}{2} - \frac{4}{9}} > \frac{4}{5}$,
we obtain $\|T_{4/5} f\|_2 \le \|f\|_{9/5}$. Finally, since $f \in [0, 1]$ satisfies $\E[f] \le \alpha$, we have $\|f\|_2 \le \alpha^{1/2}$ and $\|f\|_{9/5} \le \alpha^{5/9}$. Combining with~\eqref{eq:channel-identity} yields $\frac{1}{5}\Prb[S \in \cT \land S' \in \cT] \le \E[f(U) f(W)] \le \alpha^{1/2}\alpha^{5/9} = \alpha^{19/18}$.
\end{proof}

\subsection{Proof of Lemma~\ref{lem:johnson-ib-collision}}\label{subsec:proof-johnson-ib}

We prove Lemma~\ref{lem:johnson-ib-collision} by combining the Expander Mixing Lemma with an elementary bound on the eigenvalues of $J(m, q, t)$ when $t \approx q/n$ (cf.~\cite{BCIM18,KR25}). For a simple regular undirected graph $H$ on $N$ vertices with positive degree, let $1 = \theta_1 \ge \theta_2 \ge \dots \ge \theta_N$ denote the eigenvalues of its normalized adjacency matrix, and set $\lambda(H) := \max_{2 \le c \le N} |\theta_c|$.

\begin{proposition}[Expander Mixing Lemma{~\cite[Lemma~2.3]{AC88}}]\label{prop:eml}
For any simple regular undirected graph $H$ on $N$ vertices with positive degree and any vertex subset $\cT$ of density $\alpha := |\cT| / N$,
\begin{align*}
  \Prb_{(S, S') \sim E(H)}\!\left[S \in \cT \;\land\; S' \in \cT\right]
  &\le \alpha^2 + \lambda(H)\,\alpha(1 - \alpha).
\end{align*}
\end{proposition}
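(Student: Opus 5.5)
The plan is to write the collision probability as a quadratic form of the normalized adjacency matrix, split the indicator of $\cT$ into its mean and a mean-zero part, and bound the mean-zero part spectrally. Let $P$ be the normalized adjacency matrix of $H$, so $P_{S,S'} = \I[\{S,S'\} \in E(H)]/d$, where $d > 0$ is the common degree. Since $H$ is undirected and regular, $P$ is symmetric and doubly stochastic with $P\mathbf{1} = \mathbf{1}$.

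I will work with the inner product $\langle f, h\rangle := \E_{S \sim \Unif(V(H))}[f(S)h(S)]$ on real functions on the $N$ vertices. With respect to it, $P$ is self-adjoint and has an orthonormal eigenbasis whose first vector is the constant function $\mathbf{1}$ (eigenvalue $\theta_1 = 1$).

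\textbf{Step 1 (rewrite the probability).} By regularity, a uniformly random oriented edge $(S, S') \sim E(H)$ is obtained by drawing $S$ uniformly from the $N$ vertices and then $S'$ uniformly among the $d$ neighbours of $S$. Writing $f := \I_{\cT}$, this gives
\[
  \Prb_{(S,S') \sim E(H)}[S \in \cT \land S' \in \cT] = \E_S\big[f(S)\,(Pf)(S)\big] = \langle f, Pf\rangle .
\]

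\textbf{Step 2 (split off the mean).} Decompose $f = \alpha\mathbf{1} + g$, where $\alpha = \langle f, \mathbf{1}\rangle$ and $\langle g, \mathbf{1}\rangle = 0$. Since $f^2 = f$, we have $\langle f, f\rangle = \alpha$, and hence $\langle g, g\rangle = \alpha - \alpha^2 = \alpha(1-\alpha)$.

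Self-adjointness and $P\mathbf{1} = \mathbf{1}$ kill the cross terms: $\langle \mathbf{1}, Pg\rangle = \langle P\mathbf{1}, g\rangle = \langle \mathbf{1}, g\rangle = 0$, and likewise $\langle g, P\mathbf{1}\rangle = 0$. Therefore
\[
  \langle f, Pf\rangle = \alpha^2 + \langle g, Pg\rangle .
\]

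\textbf{Step 3 (spectral bound).} Because $g$ is orthogonal to $\mathbf{1}$, its expansion in the orthonormal eigenbasis uses only eigenvectors with eigenvalues $\theta_2, \dots, \theta_N$. Hence
\[
  \langle g, Pg\rangle \le \max_{c \ge 2} |\theta_c| \cdot \langle g, g\rangle = \lambda(H)\,\alpha(1-\alpha),
\]
which is exactly the claimed bound.

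This argument has no real obstacle. The only point needing care is the orthogonality bookkeeping in Step 3: the constant function is the top eigenvector, so the eigenvalue $1$ does not appear when bounding the quadratic form on $g$. This is legitimate because the eigenbasis can be chosen with $\mathbf{1}$ as its first element, since $P$ is symmetric and $P\mathbf{1} = \mathbf{1}$. If $H$ is disconnected, so that $1$ also occurs among $\theta_2, \dots, \theta_N$, the bound still holds with $\lambda(H) = 1$.
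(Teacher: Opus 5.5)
Your proof is correct: the identity $\Prb[S \in \cT \land S' \in \cT] = \langle \I_{\cT}, P\I_{\cT}\rangle$, the split $\I_{\cT} = \alpha\mathbf{1} + g$ with $g \perp \mathbf{1}$ and $\langle g, g\rangle = \alpha(1-\alpha)$, and the spectral bound on $\langle g, Pg\rangle$ are all sound. The paper does not prove this proposition itself and only cites Alon--Chung's Lemma~2.3, whose standard proof is this same mean-plus-orthogonal-part decomposition, so your route is the intended one.
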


The spectrum of $J(m, q, t)$ is given explicitly by the Eberlein polynomials:\footnote{Strictly speaking, {\cite[Section~3]{BCIM18}} parameterizes $J(m, q, t)$ by distance $q - t$ and states the \emph{unnormalized} eigenvalues. Dividing the third formula in {\cite[Section~3]{BCIM18}} by the degree $\binom{q}{q - t}\binom{m - q}{q - t}$ of $J(m, q, t)$ yields the expression in Proposition~\ref{prop:eigenvalues}.}

\begin{proposition}[Spectrum of $J(m, q, t)$~(e.g.,~\cite{BCIM18})]\label{prop:eigenvalues}
Let $t, q, m$ be integers such that $0 \le t < q \le m/2$. Counting eigenvalues with multiplicity, the spectrum of the normalized adjacency matrix of $J(m, q, t)$ consists of $\lambda_r(m, q, t)$ with multiplicity $\binom{m}{r} - \binom{m}{r - 1}$ for $0 \le r \le q$ (where $\binom{m}{-1} := 0$), given by $\lambda_0(m, q, t) = 1$ and, for each $1 \le r \le q$,
\begin{align}\label{eq:eigenvalue-formula}
  \lambda_r(m, q, t)
  = \sum_{h=0}^r (-1)^{r - h} \binom{r}{h} \frac{\binom{t}{h}\binom{q - t}{r - h}}{\binom{q}{h}\binom{m - q}{r - h}} = \sum_{h=0}^r (-1)^{r - h} \binom{r}{h} \frac{(t)_h}{(q)_h} \frac{(q - t)_{r - h}}{(m - q)_{r - h}},
\end{align}
where, for any integers $z, d \ge 0$, $(z)_d := \prod_{w=0}^{d - 1} (z - w)$ denotes\footnote{Note that $(z)_d = 0$ whenever $d > z$.} the falling factorial (with $(z)_0 = 1$).
\end{proposition}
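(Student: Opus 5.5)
The plan is to use the standard representation theory of the Johnson scheme: first identify the eigenspaces, and then compute each eigenvalue by applying the normalized adjacency operator to one explicit test vector.

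\textbf{Eigenspaces.} Let $M$ be the normalized adjacency operator of $J(m,q,t)$, acting on functions $F:\binom{[m]}{q}\to\R$ by $(MF)(S)=\E_{S'}[F(S')]$, where $S'$ is uniform among the $q$-subsets with $|S\cap S'|=t$.
\begin{itemize}
\item For $0\le r\le q$, let $W_r$ be the span of the functions $S\mapsto \I[R\subseteq S]$ over $R\in\binom{[m]}{r}$. These spaces are nested, $W_0\subseteq W_1\subseteq\cdots\subseteq W_q$.
\item Set $V_r:=W_r\cap W_{r-1}^{\perp}$. It is classical that $\dim W_r=\binom{m}{r}$ for $r\le q\le m/2$, so $\dim V_r=\binom{m}{r}-\binom{m}{r-1}$.
\item $M$ commutes with the action of $S_m$. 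The pair $(S_m,S_q\times S_{m-q})$ is a Gelfand pair, i.e.\ the permutation module on $\binom{[m]}{q}$ is multiplicity-free. Its irreducible summands are exactly the Specht modules $S^{(m-r,r)}\cong V_r$.
\item By Schur's lemma, $M$ therefore acts as a scalar $\lambda_r$ on each $V_r$. This gives the claimed multiplicities, and $\lambda_0=1$ because $V_0$ consists of the constant functions.
\end{itemize}
I would cite this step rather than reprove it. It is the one genuinely structural input and the main obstacle if a self-contained argument is wanted. An elementary alternative is to show directly that $M W_r\subseteq W_r$, since $M\I[R\subseteq\cdot]$ depends only on $|R\cap S|$, and that $M$ is self-adjoint. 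This gives invariance of $V_r$ but not yet that $M$ is scalar on $V_r$, which is where multiplicity-freeness is needed.

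\textbf{Test vector.} For $1\le r\le q$, pick distinct items $a_1,\dots,a_r,b_1,\dots,b_r$; this is possible since $2r\le 2q\le m$. Define
\[
f(S)=\prod_{k=1}^r\bigl(\I[a_k\in S]-\I[b_k\in S]\bigr).
\]
\begin{itemize}
\item Expanding the product shows $f\in W_r$.
\item $f$ is orthogonal to every $\I[R\subseteq\cdot]$ with $|R|<r$. Indeed, some pair index $k$ has $\{a_k,b_k\}\cap R=\emptyset$. The transposition swapping $a_k$ and $b_k$ fixes $\I[R\subseteq\cdot]$ and negates $f$, so the inner product vanishes.
\item Hence $f\in V_r$. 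Since $V_r$ is an eigenspace, $\lambda_r=(Mf)(S)/f(S)$ for any $S$ with $f(S)\neq0$.
\end{itemize}

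\textbf{Evaluating the eigenvalue.} Choose $S\supseteq\{a_1,\dots,a_r\}$ with $S\cap\{b_1,\dots,b_r\}=\emptyset$. This is possible since $r\le q$ and $m-q\ge q\ge r$, and it gives $f(S)=1$.
\begin{itemize}
\item Expand
\[
f(S')=\sum_{H\subseteq[r]}(-1)^{r-|H|}\,\I\bigl[\{a_k\}_{k\in H}\cup\{b_k\}_{k\notin H}\subseteq S'\bigr].
\]
\item For a uniformly random neighbor $S'$ of $S$, the set $S\cap S'$ is a uniform $t$-subset of $S$, and $S'\setminus S$ is an independent uniform $(q-t)$-subset of $[m]\setminus S$.
\item So the probability that $h:=|H|$ prescribed $a$'s lie in $S\cap S'$ is $(t)_h/(q)_h$. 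The probability that $r-h$ prescribed $b$'s lie in $S'\setminus S$ is $(q-t)_{r-h}/(m-q)_{r-h}$, and these two events are independent.
\item Grouping the $2^r$ terms by $h$ gives $\binom{r}{h}$ sets $H$ of each size, each contributing the same product.
\end{itemize}
This yields exactly \eqref{eq:eigenvalue-formula}. The equality with the binomial form is the identity $\binom{t}{h}/\binom{q}{h}=(t)_h/(q)_h$, and similarly for the second ratio.
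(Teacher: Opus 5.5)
Your proof is correct, but it does not follow the paper. The paper gives no derivation of this statement at all: it imports the unnormalized Eberlein-polynomial eigenvalues from \cite{BCIM18} (stated there in terms of the distance $q-t$) and divides by the degree $\binom{q}{q-t}\binom{m-q}{q-t}$.

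You instead cite only the structural input, namely the $S_m$-invariant decomposition into the spaces $V_r$ of dimension $\binom{m}{r}-\binom{m}{r-1}$, on each of which $M$ is scalar. You then compute that scalar by evaluating $M$ on the antisymmetrized test vector $f$ at a point where $f=1$. This gives a self-contained derivation of exactly the falling-factorial form \eqref{eq:eigenvalue-formula}, with no reparameterization or normalization bookkeeping to check. It also explains that form: $(t)_h/(q)_h$ and $(q-t)_{r-h}/(m-q)_{r-h}$ are the probabilities that prescribed elements of $S$, respectively of its complement, lie in a random neighbor $S'$. These are the without-replacement versions of $p_1^h$ and $p_2^{r-h}$ that the proof of Lemma~\ref{lem:johnson-ib-collision} compares with $(p_1-p_2)^r$. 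The citation route is simply shorter.

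One correction to your aside: multiplicity-freeness is not needed to get the scalar action. Since $(M\I[R\subseteq\cdot])(S)=g(|R\cap S|)$ for a function $g$ depending only on $r=|R|$, you can expand $g$ in the basis $j\mapsto\binom{j}{k}$. Using $\binom{|R\cap S|}{k}=\sum_{R'\subseteq R,\,|R'|=k}\I[R'\subseteq S]$, this gives $M\I[R\subseteq\cdot]\in c_r\,\I[R\subseteq\cdot]+W_{r-1}$, where $c_r=\sum_{j=0}^r(-1)^{r-j}\binom{r}{j}g(j)$ is independent of $R$. Because $M$ is self-adjoint and preserves every $W_s$, it commutes with the orthogonal projection $P$ onto $V_r$, so it acts on $V_r=P(W_r)$ as the scalar $c_r$. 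Since $g(j)=\frac{(t)_j}{(q)_j}\frac{(q-t)_{r-j}}{(m-q)_{r-j}}$, this $c_r$ is again \eqref{eq:eigenvalue-formula}. The only external fact still needed is $\dim W_r=\binom{m}{r}$ for $r\le q\le m/2$.
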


We can now prove Lemma~\ref{lem:johnson-ib-collision}.

\begin{proof}[Proof of Lemma~\ref{lem:johnson-ib-collision}]
Since $n \ge 3$ and $\delta \in (0, 1/6]$, we have $q = m/n < m/2$ and $0 < \frac{1 - \delta}{n} \le p_1 := \frac{t}{q} \le \frac{1 + \delta}{n} \le \frac{7}{18} < \frac{1}{2}$. Setting $p_2 := \frac{q - t}{m - q} \in (0, 1)$, we obtain
\begin{align*}
  |p_1 - p_2| &= \frac{n}{n - 1}\frac{|t - q/n|}{q} \le \frac{\delta}{n - 1} < \delta
  \qquad\text{and}\qquad
  p_1 + p_2 = \frac{t}{q} + \frac{q - t}{(n - 1)q} \le \frac{1}{2} + \frac{t}{2 q} < \frac{3}{4}.
\end{align*}

Next, let $a, b$ be integers such that $1 \le a < b$, and set $p := a / b \in (0, 1)$. Note that $\frac{(a)_\ell}{(b)_\ell} = 0$ for every integer $\ell \in \{a + 1, \dots, b\}$, whereas for every integer $0 \le \ell \le a$ we have $0 < \frac{(a)_\ell}{(b)_\ell} \le p^\ell$ and, since $1 \le b - a \le b - \ell$,
\[
  0 \le p - \frac{a - \ell}{b - \ell} = \frac{\ell(b - a)}{b(b - \ell)} \le \frac{\ell}{b}.
\]
Hence, for every integer $1 \le s \le b$, each summand below with $\ell \le a$ lies in $\big[0, \frac{\ell}{b}\, p^{s - 1}\big]$ (while any summand with $\ell > a$ is zero), and telescoping gives
\begin{align}\label{eq:falling-factorial-bound}
  0 \le p^s - \frac{(a)_s}{(b)_s}
  &= \sum_{\ell=0}^{s - 1} p^{s - 1 - \ell}\!\left(p - \frac{a - \ell}{b - \ell}\right)\frac{(a)_\ell}{(b)_\ell}
   = \sum_{\ell=0}^{\min(s - 1, \, a)} p^{s - 1 - \ell}\!\left(p - \frac{a - \ell}{b - \ell}\right)\frac{(a)_\ell}{(b)_\ell} \nonumber \\
  &\le \sum_{\ell=0}^{\min(s - 1, \, a)} \frac{\ell}{b}\, p^{s - 1}
   \le \frac{\binom{s}{2}}{b}\, p^{s - 1}.
\end{align}

Now fix any integer $1 \le r \le q$. Subtracting $(p_1 - p_2)^r = \sum_{h=0}^r (-1)^{r - h}\binom{r}{h} p_1^h p_2^{r - h}$ from~\eqref{eq:eigenvalue-formula} yields
\begin{align}
  \left|\lambda_r(m, q, t) - (p_1 - p_2)^r\right|
  &\le \sum_{h=0}^r \binom{r}{h} \left(p_1^h p_2^{r - h} - \frac{(t)_h}{(q)_h}\frac{(q - t)_{r - h}}{(m - q)_{r - h}}\right) \nonumber \\
  &\le \sum_{h=1}^r \binom{r}{h} \left(p_1^h - \frac{(t)_h}{(q)_h}\right) p_2^{r - h}
     + \sum_{h=0}^{r - 1} \binom{r}{h} p_1^h\!\left(p_2^{r - h} - \frac{(q - t)_{r - h}}{(m - q)_{r - h}}\right) \nonumber \\
  &\le \sum_{h=1}^r \binom{r}{h} \frac{\binom{h}{2}}{q}\, p_1^{h - 1} p_2^{r - h}
     + \sum_{h=0}^{r - 1} \binom{r}{h} \frac{\binom{r - h}{2}}{q}\, p_1^h p_2^{r - h - 1} \nonumber \\
  &\le \frac{\binom{r}{2}}{q} \sum_{h=1}^r \binom{r - 1}{h - 1} p_1^{h - 1} p_2^{r - h}
     + \frac{\binom{r}{2}}{q} \sum_{h=0}^{r - 1} \binom{r - 1}{h} p_1^h p_2^{r - h - 1} \nonumber \\
  &= \frac{r(r - 1)}{q}\,(p_1 + p_2)^{r - 1}, \label{eq:eigen-diff-bound}
\end{align}
where the first two inequalities use $0 \le \zeta_1 \zeta_2 \le \xi_1 \xi_2$ and $\xi_1 \xi_2 - \zeta_1 \zeta_2 \le (\xi_1 - \zeta_1)\xi_2 + \xi_1(\xi_2 - \zeta_2)$ for any reals $0 \le \zeta_1 \le \xi_1$ and $0 \le \zeta_2 \le \xi_2$ (noting that $p_1^0 - \frac{(t)_0}{(q)_0} = p_2^0 - \frac{(q - t)_0}{(m - q)_0} = 0$), the third follows from~\eqref{eq:falling-factorial-bound}, and the fourth uses $\binom{r}{h}\binom{h}{2} = \frac{r(h - 1)}{2}\binom{r - 1}{h - 1} \le \binom{r}{2}\binom{r - 1}{h - 1}$ for $1 \le h \le r$ (and similarly for $r - h$). 

To bound the last term in \eqref{eq:eigen-diff-bound}, recall that $p_1 + p_2 < \frac{3}{4} < e^{-1/4}$ and $e^\tau \ge \tau^2 / 2$ for all reals $\tau \ge 0$. We thus have
\[
  r(r - 1)(p_1 + p_2)^{r - 1}
  \le r(r - 1)\!\left(\frac{3}{4}\right)^{r - 1}
  = \frac{4}{3}\, r(r - 1)\!\left(\frac{3}{4}\right)^r
  < \frac{4}{3}\, r^2 e^{-r/4}
  \le \frac{4}{3} \cdot 32
  < 50.
\]

From this and \eqref{eq:eigen-diff-bound}, we have $|\lambda_r(m, q, t)| < |p_1 - p_2|^r + \frac{50}{q} \le \delta + \frac{50 n}{m}$ for all $1 \le r \le q$. Hence, $\lambda(J(m, q, t)) = \max_{1 \le r \le q} |\lambda_r(m, q, t)| < \delta + \frac{50 n}{m}$, and Proposition~\ref{prop:eml} gives
\begin{align*}
  \Prb_{(S, S') \sim E(J(m, q, t))}\!\left[S \in \cT \;\land\; S' \in \cT\right]
  &\le \alpha^2 + \lambda\big(J(m, q, t)\big)\,\alpha(1 - \alpha) \le \alpha^2 + \alpha\!\left(\delta + \frac{50 n}{m}\right). \qedhere
\end{align*}
\end{proof}

\section{Conclusion}\label{sec:conclusion}

In this paper, we resolved the open question of Benad\`e et al.~\cite{BHPV24} on the existence of envy-free allocations for arbitrary non-additive valuation functions in the random renaming model. We proved that for any number of agents $n \ge 3$ and any number of items $m$ divisible by $n$, an envy-free allocation exists with high probability whenever $m = \omega(n^{360})$. Our proof requires no monotonicity, subadditivity, or positivity assumptions on the base valuations $v_1, \dots, v_n$, and relies on a combination of the second moment method, top-$\alpha$ subset ranking, and spectral analysis of the generalized Johnson graph $J(m, q, t)$~\cite{BCIM18,KR25}.

An interesting open question is to close the polynomial gap between our result and the birthday-paradox lower bound $m = \Omega(n^2)$ from~\cite{BHPV24}. In particular, does $m = \omega(n^2)$ (with $n \mid m$) suffice for an envy-free allocation to exist with high probability for arbitrary valuation functions, as it does for order-consistent valuations? Reaching the $m = \omega(n^2)$ threshold appears to encounter a natural barrier in our spectral second-moment approach: our spectral second-moment bound requires $\alpha^2 + \alpha \max_{r \ge 1} |\lambda_r(m, q, t_i)| \le \alpha^2(1 + o(1/n))$, i.e., $\max_{r \ge 1} |\lambda_r(m, q, t_i)| = o(\alpha / n) \ll 1/n^3$ (as $\alpha = o(1/n^2)$ is needed for the first-moment union bound). However, for every integer $t_i \in \{0, \dots, q - 1\}$, \eqref{eq:eigenvalue-formula} yields $\max_{r \ge 1} |\lambda_r(m, q, t_i)| \ge \Omega(1/m)$, which precludes $\max_{r \ge 1} |\lambda_r(m, q, t_i)| \ll 1/n^3$ whenever $m = O(n^3)$. Thus, improving the threshold to $m = \omega(n^2)$ for general valuations will likely require substantially new ideas beyond black-box spectral bounds on $J(m, q, t)$.

\section*{Acknowledgments}

We thank Warut Suksompong and Daniel Halpern for helpful discussions.

\section*{AI Disclosure}

The top-$\alpha$ subset ranking idea and its connection to the expansion of generalized Johnson graphs were discovered by the author, whereas an AI assistant (Gemini) discovered the proofs of the expansion bounds (in Section~\ref{sec:fourier-proofs}) and helped draft the paper. The author has verified and heavily edited all the proofs, and takes full responsibility of the correctness of the results.

\bibliographystyle{alpha}
\bibliography{references}

\end{document}